\newtheorem{theorem}{Theorem}
\newtheorem{lemma}{Lemma}
\newtheorem{corollary}{Corollary}
\newtheorem{claim}{Claim}
\crefname{section}{Section}{Sections}
\crefname{theorem}{Theorem}{Theorem}
\crefname{lemma}{Lemma}{Lemmata}
\crefname{corollary}{Corollary}{Corollaries}
\crefname{claim}{Claim}{Claims}
\crefname{equation}{}{}
\newcommand{\eps}{\varepsilon}
\newcommand{\xmax}{x_{\text{max}}^*}
\newcommand{\xmin}{x_{\text{min}}}
\newcommand{\phalf}{p_{1/2}}
\newcommand{\ptwo}{p_{2/3}}
\renewcommand{\d}{\,\mathrm{d}}
\renewcommand{\epsilon}{\varepsilon}
\def\N{\mathds{N}}
\def\R{\mathds{R}}
\def\CR{\mathrm{cr}}
\newcommand{\kp}{KP\xspace}
\def\opt{\mathrm{opt}}
\def\gain{\mathrm{gain}}
\def\size{\mathrm{size}}
\begin{document}
\title{Online Unbounded Knapsack}

\author*[1]{\fnm{Hans-Joachim} \sur{Böckenhauer}}\email{hjb@inf.ethz.ch}

\author[2]{\fnm{Matthias} \sur{Gehnen}}\email{gehnen@cs.rwth-aachen.de}

\author[1]{\fnm{Juraj} \sur{Hromkovič}}\email{juraj.hromkovic@inf.ethz.ch}

\author[3]{\fnm{Ralf} \sur{Klasing}}\email{ralf.klasing@labri.fr}

\author[1]{\fnm{Dennis} \sur{Komm}}\email{dennis.komm@inf.ethz.ch}

\author[2]{\fnm{Henri} \sur{Lotze}}\email{lotze@cs.rwth-aachen.de}

\author[2]{\fnm{Daniel} \sur{Mock}}\email{mock@cs.rwth-aachen.de}

\author[2]{\fnm{Peter} \sur{Rossmanith}}\email{rossmani@cs-rwth-aachen.de}

\author[1]{\fnm{Moritz} \sur{Stocker}}\email{moritz.stocker@inf.ethz.ch}

\affil[1]{\orgdiv{Department of Computer Science}, \orgname{ETH Zürich}, \city{Zürich}, \country{Switzerland}}

\affil[2]{\orgdiv{Department of Computer Science}, \orgname{RWTH Aachen}, \city{Aachen}, \country{Germany}}

\affil[3]{\orgdiv{CNRS, LaBRI}, \orgname{Université de Bordeaux}, \city{Talence}, \country{France}}

\abstract{
We analyze the competitive ratio and the advice complexity of the online
unbounded knapsack problem. An instance is given as a sequence of $n$ items
with a size and a value each, and an algorithm has to decide whether or not
and how often to pack each item into a knapsack of bounded capacity. The items are given online
and the total size of the packed items must not exceed the
knapsack's capacity, while the objective is to maximize the total value of the packed items.  While each item can only be
packed once in the classical knapsack problem (also called the 0-1 knapsack
problem), the unbounded version allows for items to be packed multiple
times. We show that the simple unbounded knapsack
problem, where the size of each item is equal to its value, allows for a competitive
ratio of 2. We also analyze randomized algorithms and show that, in contrast to the 0-1 knapsack problem, one
uniformly random bit cannot improve an algorithm's performance. More
randomness lowers the competitive ratio to less than $1.736$, but it can never
be below $1.693$.

In the advice complexity setting, we
measure how many bits of information (so-called advice bits) the algorithm has to know to achieve some
desired solution quality. For the simple unbounded knapsack problem,  
one advice bit lowers the competitive ratio to $3/2$. While this cannot be
improved with fewer than $\log_2 n $ advice bits for instances of length $n$, a competitive ratio of $1+\eps$ can be achieved with $O(\eps^{-1} \cdot\log(n\eps^{-1}))$ advice bits for any $\eps>0$. We further show that no
amount of advice bounded by a function $f(n)$ allows an algorithm to be optimal.

We also study the online general unbounded knapsack problem and show that
it does not allow for any bounded competitive ratio for both deterministic and randomized
algorithms, as well as for algorithms using fewer than $\log_2 n$ advice
bits. We also provide a surprisingly simple algorithm that uses 
$O(\eps^{-1} \cdot\log(n\eps^{-1}))$ advice bits to achieve a competitive ratio of $1+\eps$ for any $\eps>0$.
}
\keywords{online knapsack, unbounded knapsack, online algorithm, advice complexity}
\maketitle
\newpage
\section{Introduction}
The knapsack problem is a prominent optimization problem that has been
studied extensively, particularly in the context of approximation
algorithms.  The input consists of items with different sizes and
values and a container, the ``knapsack,\!'' with a given size.  The goal is to select a
subset of the items that fit together into the knapsack, maximizing
their combined value.  Sometimes the name \emph{general knapsack problem} or
\emph{weighted knapsack problem} is used to denote this problem, and sometimes
just \emph{knapsack problem.}  A variant where the values of the items are
proportional to their sizes is sometimes called the \emph{proportional
knapsack problem,} \emph{simple knapsack problem}, or just \emph{knapsack problem}.
That variant is a natural one:  It applies if the items are made of the same
material.  Another popular
name for both variants that emphasizes the fact that every item can be
either rejected or taken once is \emph{0-1 knapsack problem}. In this paper we
will abbreviate the term ``knapsack problem'' with \kp and use the names
\emph{general \kp} and \emph{simple \kp} to avoid any confusion.  

It is well-known that both the general and simple \kp are
weakly NP-hard~\cite{Karp1972,GareyJ1979}.  There is an FPTAS for both problems,
which means that they can be approximated arbitrarily well in polynomial
time~\cite{IbarraK1975,Lawler1979,MagazineO1981,KellererP2004,Jin2019}.

The \kp has many practical applications related to packing resources into constrained spaces. Many of these scenarios take place on an industrial level and it is therefore not unreasonable to assume that items can be packed an essentially unlimited amount of times. We can therefore analyze these situations even more precisely with a variant of the \kp where each item can be packed an arbitrary number of times instead of just once.
This variant is known as the \emph{unbounded
knapsack problem}. Again, the problem is NP-hard and allows for
an FPTAS~\cite{IbarraK1975,KellererPP2004,Lawler1979,JansenK2018}.
In this paper, we investigate the
unbounded \kp as an online problem.  Instead of seeing all
items at once, they arrive one by one and an algorithm has to decide
on the spot whether (and how often) to pack an item into the knapsack or to
discard it. Every decision is final.

Online algorithms are most commonly judged by their \emph{competitive
ratio:}  If we have an input instance $I$ with an optimal solution that
fills a fraction $\opt(I)$ of the knapsack and an algorithm packs only a
fraction $\gain(I)$, then it has a (strict) competitive ratio of $\opt(I)/\gain(I)$
on instance~$I$.  The competitive ratio of the algorithm is the supremum of
all ratios for all possible input instances.  Finally, the competitive
ratio of the problem is the best competitive ratio of all algorithms
that can solve it.  Competitive analysis and the competitive ratio
were introduced by Sleator and Tarjan~\cite{SleatorT1985} and have been
applied to myriads of online problems ever since~\cite{FiatW1998}. For an
overview, we also point to the textbooks by Komm~\cite{Komm2016} and by Borodin
and El-Yaniv~\cite{BorodinE1998}.

Solving both the simple and general 0-1 \kp as online problems is very
hard.  Indeed, there is no algorithm with a bounded competitive ratio.  Such
problems are called \emph{non-competitive}.  It is easy to see that
this is the case~\cite{Marchetti-SpaccamelaV95}:  From now on, we will assume that the capacity of our
knapsack is exactly~$1$.  We look at two different instances for the online simple 0-1 \kp: $I_1=(\epsilon)$ and $I_2=(\epsilon,1)$; so the first
instance consists of only one tiny item $\epsilon>0$, while the second
instance starts with the same item, but follows up with an item that
fills the whole knapsack on its own.  Any algorithm sees first the
item $\epsilon$ and has to decide whether to pack it or discard it.
If the algorithm discards it, the competitive ratio is unbounded for~$I_1$.
If it accepts it, the competitive ratio is very large for $I_2$:  The
optimum is to pack only the second item, which yields a gain of~$1$.
The algorithm, however, packs only $\epsilon$.  The competitive ratio is
therefore $1/\epsilon$, which is arbitrary large.
Note that an average case analysis instead of worst-case analysis paints
a much nicer picture~\cite{Marchetti-SpaccamelaV95,Lueker1998}.

The situation is quite different for the simple unbounded \kp.
There is a simple algorithm that achieves a competitive ratio of~$2$:
Just pack the first item you see as often as possible.  This strategy
will fill more than half of the knapsack.  The optimum can never be
better than~$1$, resulting in a competitive ratio of at most~$2$.
Conversely, the simple unbounded \kp cannot have a competitive ratio
of less than~$2$.  Again, consider two input instances $I_1=(1/2+\epsilon)$
and $I_2=(1/2+\epsilon,1)$.  Any deterministic algorithm will fill the knapsack at most
half-optimal on either $I_1$ or~$I_2$.

This is a first example of a variant of the \kp that is competitive.
There are many other such variants and a curious pattern emerges: Most
of them have also a competitive ratio of~$2$.  For example, there is the
simple \kp with reservations.  In this model, the online algorithm can still pack
or discard an item, but there is also a third possibility:  For an item
of size~$s$, it may pay a reservation cost of $\alpha s$, where $\alpha$ is
a constant -- the \emph{reservation factor}.  Reserved items are not packed, but
also not discarded forever.  They can be packed at a later point in time.
The gain of the algorithm is the combined size of all items minus the
reservation costs.  It is not surprising that the competitive ratio grows
arbitrarily large as we make the reservation cost larger and larger.
Naturally, if the reservation factor exceeds $1$, the problem becomes
non-competitive since the option of reserving becomes useless and we are
back at the classical \kp.  What is surprising, however, is what happens if we
decrease the reservation factor towards~$0$:  In that case, the competitive ratio does
not approach~$1$, but~$2$~\cite{BockenhauerBHLR2021}.

The \kp with removal allows the online algorithm to remove items from the
knapsack.  Once removed, they are discarded and cannot be used in the
future.  Iwama and Taketomi showed that a best possible online algorithm
has the golden ratio $\phi=(1+\sqrt5)/2\approx 1.618$ as its competitive
ratio~\cite{IwamaT2002} for the simple \kp with removal, while the
general \kp with removal remains non-competitive.
The variant where removals are allowed, but
at a cost, has also been investigated~\cite{HanKM14}, as well as a
variant where a limited number of removed items can be packed again~\cite{BoeckenhauerKMRSW2023}.

We have already seen that the simple unbounded \kp has a
competitive ratio of $2$.  To further investigate the problem, we
consider randomized algorithms.  For the simple 0-1 \kp, it is known that
the best possible competitive ratio for randomized online
algorithms is $2$~\cite{BockenhauerKKR2012,BockenhauerKKR2014}.
Surprisingly, a single random bit is sufficient to achieve this ratio
and increasing the number of random bits does not help at all.  For the
simple unbounded \kp, one random bit does not help and the
competitive ratio stays at $2$.  Using two random bits, however, lowers
the competitive ratio to $24/13<1.847$ in \cref{cor:twobits}. We show that no randomized algorithm can
have a competitive ratio of less than $1.693$ in \cref{thm:randlow} and provide a randomized
algorithm with a competitive ratio of less than $1.736$ in \cref{thm:randup}.

Instead of only using classical competitive analysis, Dobrev, Kr\'alovi\v{c}, and
Pardubsk\'a introduced the notion of \emph{advice complexity} of online
algorithms~\cite{DobrevKP2009}, which was later refined by Emek et
al.~\cite{EmekFKR2011} as well as Hromkovi\v{c} et
al.~\cite{HromkovicKK2010} and B\"ockenhauer et
al.~\cite{BockenhauerKKKM17}. In this paper, we use the \emph{tape model} by B\"ockenhauer
et al.  An \emph{online algorithm with advice} is allowed
to read additional information about the instance at hand which is provided by an
all-knowing oracle. The question is how many advice bits have to be read to
achieve a given competitive ratio or even to become optimal. Since its
introduction, the concept of advice complexity has seen a myriad of
applications. Several variants of the \kp  were investigated
\cite{BockenhauerKKR2012,BockenhauerKKR2014,BockenhauerFR2024}. 
For an overview of further advice complexity results, see the textbook by
Komm~\cite{Komm2016} and the survey by Boyar et al.~\cite{BoyarFKLM2017}.
There are also many more recent results. Some of them apply the classical
framework of advice complexity to a wealth of other online problems
including the minimum spanning tree problem \cite{BianchiBBKP2018}, several
variants of online matching
\cite{BockenhauerCU2018,JinM2022,LavasaniP2023}, node and edge deletion
problems \cite{Rossmanith2018,ChenHLR2021,BLotze23}, bin covering
\cite{BoyarFKL2021}, two-way trading \cite{Fung2021}, dominating set
\cite{BoyarEFKL2019,BockenhauerHKU2021}, disjoint path allocation
\cite{BockenhauerKW2022}, or 2D vector packing \cite{NilssonV2021}. Advice
complexity was also used in online-like settings such as exploring a graph
by guiding an autonomous agent
\cite{FraigniaudIP08,DobrevKM12,GorainP19,BockenhauerFU2022} or analyzing
priority algorithms as a model of greedy strategies
\cite{BorodinBLP20,BoyarLP2024,BockenhauerFH2022}.

Two recent strands of research focus on relaxing the condition that the
advice is given by an all-knowing oracle. In the model of untrusted advice
\cite{AngelopoulosDJKR2020,LeeMHLSL2021,AngelopoulosK2023}, one tries to
guarantee that the online algorithm can make good use of the advice as long
as it is of high quality, while the solution quality does not deteriorate
by much in the presence of faulty advice. In a closely related model, a
machine-learning approach is used for generating the advice; the power of this
machine-learned advice has been analyzed for various problems
\cite{AlmanzaCLPR2021,AntoniadisGKK2023,Rohatgi2020,WangLW2020,KasilagRC2022,
IndykMMR2022}. In a very recent paper, Emek et al.~\cite{EmekGPS2023} amend
randomized algorithms by substituting some of the random bits by advice
bits without revealing this to the algorithm. In this paper, we focus only on the classic advice model.

\begin{table}[t]
\caption{New results for the online unbounded \kp.  The table contains the
competitive ratio of the problems.  The $\epsilon>0$ can be
an arbitrarily small constant.  Hence, $1+\epsilon$ indicates that the competitive
ratio can be arbitrarily close to~$1$, while $1$ means that the
algorithm is optimal, and $>1$ means that it \emph{cannot} be optimal.}
\label{tab:results}
\begin{tabular}{c}
\tabskip=0em
\openup0\jot
\def\eps{\epsilon}
\def\unbounded{$\infty$}
\def\frac#1#2{#1/#2}
\leavevmode
\catcode`\!=\active
\def!#1{\rlap{$~^#1$}}
\vbox{\halign{\tabskip=2em\strut#\hfil&\hfil#\hfil&\tabskip=3em\hfil#\hfil&
                     \tabskip=2em\hfil#\hfil&\tabskip=0pt\hfil#\hfil\cr

\noalign{\hrule\smallskip}
&\multispan2\hfil Online 0-1 \kp~\cite{BockenhauerKKR2014}\hfil
                  &\multispan2\hfil Online unbounded \kp\hfil\cr
\noalign{\kern-\medskipamount} 
&\multispan2\hrulefill&\multispan2\hrulefill\cr
\hfil Algorithm&simple&general&simple&general\cr
\noalign{\smallskip\hrule\smallskip}
Deterministic&\unbounded&\unbounded&$2!a$&\unbounded\cr
\noalign{\smallskip\hrule\smallskip}
One random bit&$2$&\unbounded&$2$&\unbounded\cr
Two random bits&$2$&\unbounded&$\leq\frac{24}{13}!b$&\unbounded\cr
Unlimited randomness&$2$&\unbounded&$<1.736!c$&\unbounded$!h$\cr
          &   &          &$>1.693!d$   &          \cr
\noalign{\smallskip\hrule\smallskip}
One advice bit&$2$&\unbounded&$\frac32$&\unbounded\cr
$o(\log n)$ advice&$2$&\unbounded&$\frac32!e$&\unbounded$!i$\cr
$O(\log n)$ advice&$1+\eps$&$1+\eps$&$1+\eps!f$&$1+\eps!j$\cr
$n$ advice bits&$1$&$1$&$>1!g$&$>1$\cr
\noalign{\smallskip\hrule\smallskip\footnotesize
\hbox{\strut
$^a$Thm.~\ref{thm:dettwo},
$^b$Cor.~\ref{cor:twobits},
$^c$Thm.~\ref{thm:randup},
$^d$Thm.~\ref{thm:randlow},
$^e$Thm.~\ref{thm:adviceuptolog},
$^f$Thm.~\ref{thm:simpleptas},
$^g$Thm.~\ref{thm:simpleno1}
}
\hbox{\strut
$^h$Thm.~\ref{thm:generalrandom},
$^i$Thm.~\ref{thm:generaladviceunb},
$^j$Thm.~\ref{thm:genup}
}}
}}
\end{tabular}
\end{table}

For the simple 0-1 \kp, Böckenhauer et al.~\cite{BockenhauerKKR2014} showed that one bit of advice yields a
competitive ratio of $2$.  This ratio cannot be
improved by increasing the amount of advice up to $o(\log n)$, where
$n$ is the number of items.  With $O(\log n)$ advice bits, however, a
competitive ratio arbitrarily close to $1$ can be achieved.  Another
large gap follows:  To achieve optimality, at least $n-1$ advice bits
are needed.  The latter bound has been recently improved by Frei to
$n$ bits~\cite{Frei2021}, which is optimal as $n$ bits can tell the online algorithm
for every item whether to take it or to discard it.

The situation looks similar for the general 0-1 \kp.  We need
$\Theta(\log n)$ advice bits to achieve an arbitrarily good competitive
ratio~\cite{BockenhauerKKR2014}.

For the simple unbounded \kp, we establish the following results:
With one advice bit, the optimal competitive ratio is~$3/2$.  This does
not improve for $o(\log n)$ bits of advice.  Again, with $O(\log n)$
advice bits, a competitive ratio of $1+\epsilon$ for an arbitrarily small
constant $\epsilon>0$ can be achieved.  The online general unbounded \kp stays non-competitive for a sublogarithmic number of advice bits and becomes
$(1+\epsilon)$-competitive with $O(\log n)$ advice bits.
The general unbounded \kp stays non-competitive for randomized algorithms
without advice. \Cref{tab:results} contains an
overview of the new results on the unbounded \kp and compares them to the
known results on the 0-1 \kp.

The paper is organized as follows. In \cref{sec:prelim}, we introduce
all necessary definitions and background information.
\Cref{sec:rand} contains our results on randomized online algorithms
for the simple unbounded \kp. In \cref{sec:advice}, we
consider the advice complexity of the online simple unbounded \kp. \Cref{sec:general} is devoted to the results for the
general case. In \cref{sec:conclusion}, we conclude with a
collection of open questions and reflections on the research area in
general.
Throughout this paper, $\log(\cdot)$ denotes the binary logarithm.

\section{Preliminaries}
\label{sec:prelim}

An online \kp instance of length~$n$ consists of a sequence of items
$x_1,\ldots,x_n$, where the algorithm does not know $n$ beforehand.  Each
item has a size $s_i\in[0,1]$ and a value~$v_i\geq0$;
for the online simple \kp, $s_i=v_i$ for all $1\leq i\leq n$.

An optimal solution is a subset of items that fit together in a knapsack
of size~$1$ and maximize the sum of their values.  We denote the total
value of an optimal solution for an instance~$I$ by
\[
\opt(I)=\max\Bigl\{\,
  \sum_{i\in I'} v_i\Bigm| I'\subseteq I,~\sum_{i\in I'} s_i \leq 1
\,\Bigr\}
\]
for the 0-1 \kp, and by
\[
\opt(I)=\max\Bigl\{\,
  \sum_{i\in I} k_i v_i\Bigm|k_i\in\N_0,~\sum_{i\in I} k_i s_i \leq 1
\,\Bigr\}
\]
for the unbounded \kp.
An online algorithm~$\textsc{Alg}$ maps a sequence of items to a sequence of decisions.  In the 0-1 \kp, the decision is to take the last item or to discard it.  In the unbounded \kp, the algorithm decides how often the last item is packed into the
knapsack.  We define
\begin{align*}
\size_{\textsc{Alg}}(I) &= \text{total size of items packed by~$\textsc{Alg}$ on instance~$I$,}\\
\gain_{\textsc{Alg}}(I) &= \text{total value of items packed by~$\textsc{Alg}$ on instance~$I$,}
\end{align*}
both for the 0-1 \kp and the unbounded \kp.
In the case of a randomized knapsack algorithm, the total value of the packed items is a random
variable.  We then define $\gain_{\textsc{Alg}}(I)$ as the expectation of the total value.

For a deterministic algorithm, we define its (strict) \emph{competitive ratio} on an
instance~$I$ to express the relationship of what it packs to the optimum:
\[
\CR_{\textsc{Alg}}(I)=\frac{\opt(I)}{\gain_{\textsc{Alg}}(I)}
\]
If $\gain_{\textsc{Alg}}(I)$ refers to the expected gain of a randomized algorithm~$\textsc{Alg}$, we speak
of the \emph{competitive ratio in expectation}.

Next, we define the competitive ratio of an algorithm. This is basically the
worst-case competitive ratio over all possible instances.  Finally, the competitive
ratio of an online problem is the competitive ratio of the best
algorithm:
\begin{align*}
\CR_{\textsc{Alg}}=&\sup\{\,\CR_{\textsc{Alg}}(I)\mid \text{$I$ is an instance}\,\}\\
\CR=&\inf\{\,\CR_{\textsc{Alg}} \mid \text{$\textsc{Alg}$ is an algorithm}\;\}
\end{align*}
For a randomized algorithm, the competitive ratio in expectation is defined equivalently.
Computing the competitive ratio is a typical min-max optimization problem,
which can be seen as a game
between an algorithm and an \emph{adversary} who chooses the instance.
In the context of online algorithms with advice, the situation stays
basically the same.  The algorithm, however, has also access to a
binary advice string that is provided by an \emph{oracle}
that knows the whole instance beforehand.  In that way, we can analogously
define the competitive ratio of algorithms with advice.  We say that an
algorithm $\textsc{Alg}$ is using $f(n)$ bits of advice if its reads at most the
first $f(n)$ bits from the advice string provided by the oracle on all
instances consisting of $n$ items.  Note that the length of the advice
string is not allowed to be less than $f(n)$.  Otherwise the oracle
could communicate additional information with the length of the advice
string.

\section{Randomized Algorithms}
\label{sec:rand}
In this chapter we will consider algorithms for the unbounded online \kp that may use randomization, but not advice. As a baseline, we first analyze what can be done without randomness.
A simple observation shows that deterministic algorithms can be
2-competitive for the online simple unbounded \kp, but not better.

\medskip

\begin{theorem}\label{thm:dettwo}
The competitive ratio for the online simple unbounded \kp is~2.
\end{theorem}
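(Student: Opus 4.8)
The plan is to establish matching upper and lower bounds. For the upper bound I would analyze the algorithm $\textsc{Alg}$ that skips every item of size $0$ and, upon seeing the first item of positive size $s$ (recall $v=s$ in the simple case), packs it $\lfloor 1/s\rfloor$ times and discards everything afterwards; it is trivially optimal on the all-zero instance. Its gain is exactly $\lfloor 1/s\rfloor\, s$, and the one estimate to check is $\lfloor 1/s\rfloor\, s > 1/2$ for every $s\in(0,1]$: writing $k=\lfloor 1/s\rfloor\ge 1$ we have $1/s<k+1$, hence $s>1/(k+1)$ and $ks>k/(k+1)\ge 1/2$. Since $\opt(I)\le 1$ always, this yields $\CR_{\textsc{Alg}}(I)<2$ on every instance $I$, so $\CR_{\textsc{Alg}}\le 2$ and therefore $\CR\le 2$.

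For the lower bound I would use the two-instance adversary argument sketched in the introduction. Fix $\epsilon>0$ and let $I_1=(1/2+\epsilon)$ and $I_2=(1/2+\epsilon,1)$; these are indistinguishable to any algorithm until the second item of $I_2$ arrives, so a deterministic algorithm must commit to the same number of copies of the first item in both cases. Packing it twice is infeasible because $2(1/2+\epsilon)>1$, so the only options are $0$ or $1$ copy. If it packs $0$ copies, then on $I_1$ it gains $0$ while $\opt(I_1)=1/2+\epsilon>0$, so $\CR_{\textsc{Alg}}$ is unbounded. If it packs one copy, then on $I_2$ the knapsack already holds size $1/2+\epsilon$, so neither the size-$1$ item nor a second copy fits; hence $\gain_{\textsc{Alg}}(I_2)=1/2+\epsilon$ while $\opt(I_2)=1$, giving $\CR_{\textsc{Alg}}\ge 1/(1/2+\epsilon)$. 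In either case $\CR_{\textsc{Alg}}\ge 1/(1/2+\epsilon)$ for every algorithm, and letting $\epsilon\to 0$ gives $\CR_{\textsc{Alg}}\ge 2$, hence $\CR\ge 2$. Combined with the upper bound, $\CR=2$.

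There is no real obstacle here; the argument is short. The only points needing care are (i) handling items of size (hence value) $0$ in the definition of $\textsc{Alg}$ so that the packed value is genuinely positive, and (ii) being precise that each individual instance attains ratio strictly below $2$ while the supremum over all instances equals $2$, so the value $2$ is tight but not attained. One may also note that the lower-bound construction uses nothing about the unbounded model beyond the impossibility of packing the first item twice.
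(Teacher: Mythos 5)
Your proposal is correct and follows essentially the same route as the paper: a greedy ``pack the first (nonzero) item as often as it fits'' algorithm giving gain above $1/2$ for the upper bound, and the two-instance adversary $I_1=(1/2+\epsilon)$, $I_2=(1/2+\epsilon,1)$ for the matching lower bound. Your floor-based estimate $\lfloor 1/s\rfloor s>1/2$ and the explicit handling of size-$0$ items are only minor cosmetic refinements of the paper's case distinction $x_1>1/2$ versus $x_1\leq 1/2$.
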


\begin{proof}
We start with the upper bound by providing an algorithm with
the claimed competitive ratio which simply packs the first item as often as it fits
into the knapsack. If the first item is $x_1> 1/2$, packing it once achieves a
gain of at least $1/2$. If the first item is $x_1\leq 1/2$, packing it as often
as possible achieves a gain of at least $1-x_1\geq 1/2$. In each case we
achieve a gain of at least $1/2$ and thus a competitive ratio of at most $2$.

Conversely, consider the instances
\[
I_1=\Bigl(\frac{1}{2}+\eps\Bigr),
I_2=\Bigl(\frac{1}{2}+\eps,1\Bigr)
\]
where $0<\eps<1/2$. Any algorithm will either have to pack the item $1/2+\eps$
or not. If it does not, its competitive ratio on $I_1$ is unbounded. If it
does, its competitive ratio on $I_2$ is at most
\[
\frac{1}{1/2+\eps}\,,
\]
which tends to $2$ as $\eps$ tends to $0$.
\end{proof}
In the simple 0-1 \kp, the competitive ratio improves from ``unbounded'' to $2$
when the algorithm is allowed access to one random bit~\cite{BockenhauerKKR2014}.
In the unbounded variant, this is no longer the case.
One random bit is of no use to the algorithm in terms of competitive ratio:

\medskip

\begin{theorem}
An online algorithm for the simple unbounded \kp that uses one uniformly random bit cannot have a competitive ratio of less than $2$.
\end{theorem}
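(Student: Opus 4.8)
The plan is to recycle the adversarial instances $I_1=(\tfrac12+\eps)$ and $I_2=(\tfrac12+\eps,1)$ used for the deterministic lower bound in \cref{thm:dettwo}, together with the observation that an algorithm using a single uniformly random bit is exactly the probability-$\tfrac12$ mixture of the two deterministic algorithms $A_0$ and $A_1$ obtained by hard-wiring the bit to $0$ and to $1$. The structural fact that makes one bit useless here is that on the one-item prefix $(\tfrac12+\eps)$ each $A_i$ can pack that item at most once, since two copies have total size $1+2\eps>1$; hence the probability $p$ that $\textsc{Alg}$ packs (a copy of) the first item on this prefix lies in $\{0,\tfrac12,1\}$.

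First I would split into cases according to $p$, for a fixed $\eps\in(0,\tfrac12)$. If $p\le\tfrac12$, then on $I_1$ we have $\gain_{\textsc{Alg}}(I_1)=p\,(\tfrac12+\eps)\le\tfrac12\,(\tfrac12+\eps)$ while $\opt(I_1)=\tfrac12+\eps$, so $\CR_{\textsc{Alg}}(I_1)\ge 2$ and we are finished. The only remaining possibility is $p=1$, i.e.\ \emph{both} $A_0$ and $A_1$ pack one copy of $\tfrac12+\eps$ when it arrives first. In that case I would feed the same two deterministic algorithms the instance $I_2$: the first item is identical, so both runs have already committed $\tfrac12+\eps$ units of capacity and cannot fit the subsequent item of size $1$ (nor a second copy of the first item), leaving $\gain_{\textsc{Alg}}(I_2)=\tfrac12+\eps$ against $\opt(I_2)=1$, hence $\CR_{\textsc{Alg}}(I_2)=1/(\tfrac12+\eps)$.

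To conclude I would observe that the dichotomy above holds for every $\eps\in(0,\tfrac12)$: either some choice of $\eps$ lands in the first case, giving $\CR_{\textsc{Alg}}\ge 2$ directly, or every $\eps$ lands in the second case, in which case $\CR_{\textsc{Alg}}\ge\sup_{\eps\in(0,1/2)}1/(\tfrac12+\eps)=2$ since the competitive ratio of an algorithm is a supremum over all instances. Either way $\CR_{\textsc{Alg}}\ge 2$. The whole argument is essentially immediate from \cref{thm:dettwo}; the one place that needs a little care is exactly this last step, where one must not fix $\eps$ too early — for each individual $\eps$ the bound $1/(\tfrac12+\eps)$ is strictly below $2$, and only by ranging over the whole family of instances does the supremum reach $2$. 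I expect this quantifier bookkeeping to be the main (and only mild) obstacle.
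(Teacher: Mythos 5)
Your proof is correct and follows essentially the same route as the paper: the same two instances $I_1=(\tfrac12+\eps)$ and $I_2=(\tfrac12+\eps,1)$, the same key observation that one uniform random bit forces the packing probability $p$ of the first item into $\{0,\tfrac12,1\}$, and the same case analysis (you merely merge $p\in\{0,\tfrac12\}$ into one case on $I_1$ and spell out the supremum over $\eps$ for the $p=1$ case, which the paper delegates to \cref{thm:dettwo}). No gaps; the quantifier bookkeeping you flag is handled correctly.
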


\begin{proof}

Let $\textsc{Alg}$ be any algorithm and once again consider the instances
$I_1$ and $I_2$ from the proof of \cref{thm:dettwo}. In both
instances, $\textsc{Alg}$ will choose the item $1/2+\eps$ with probability
$p$. Since $\textsc{Alg}$ only has access to one random bit, we know
that $p\in\{0,1/2,1\}$.  We have seen in \cref{thm:dettwo}
that the competitive ratio of $\textsc{Alg}$ is at least $2$ on one of
these instances if $p\in\{0,1\}$ and, if $p=1/2$, its competitive
ratio on $I_1$ is
\[
\frac{1/2+\eps}{1/2\cdot(1/2+\eps)}=2\;.
\]
\end{proof}
Additionally, in the 0-1 variant, the competitive ratio of an algorithm did not
improve with additional random bits after the first one
\cite{BockenhauerKKR2014}. This is also no longer true in the unbounded case.
For example,  with two random bits, we can achieve a competitive ratio of
$24/13$ as we will see in \cref{cor:twobits}.

A simple way to prove a lower bound on the competitive ratio of any randomized algorithm is to
provide a set of instances and show that the expected gain has to be
relatively small on \emph{at least one} of those instances \emph{for every}
randomized algorithm.  Let $\epsilon>0$ be very small.  We look at two
instances $I_1=(1/2+\epsilon)$ and $I_2=(1/2+\epsilon,1)$.  The
optimal strategy is to take the first item for~$I_1$ and the second
one for~$I_2$.  Every randomized algorithm takes the first item with some
probability~$p_0$.  Its expected gain on~$I_1$ is therefore
$p_0\cdot(1/2+\epsilon)$ and its expected gain on~$I_2$ is at most
$p_0\cdot(1/2+\epsilon)+(1-p_0)\cdot 1$, because it can pack the second item only if
it ignored the first~one.

Its competitive ratio in expectation is at least
\[
\max\Bigl\{\frac{1}{p_0}, \frac{1}{p_0\cdot(1/2+\epsilon)+1-p_0}\Bigr\}=
\max\Bigl\{\frac{1}{p_0}, \frac{1}{1-p_0(1/2-\epsilon)}\Bigr\}\;.
\]
The competitive ratio of a best possible algorithm is then at least
\[
\min_{p_0\in[0,1]}\,
\max\Bigl\{\frac{1}{p_0}, \frac{1}{1-p_0(1/2-\epsilon)}\Bigr\}
=\frac{3}{2}-\epsilon\;,
\]
and we can conclude that every randomized algorithm has a competitive
ratio of at least $3/2$ since we can make $\epsilon$ arbitrarily
small.

While this lower bound is sound, it is not the best possible.  We can
use the same argument with three instead of two instances in order to
improve it.  Let $I_1=(1/2+\epsilon)$,
$I_2=(1/2+\epsilon,3/4)$, and 
$I_3=(1/2+\epsilon,3/4,1)$.  Using the same argument, the resulting lower bound is 
\begin{equation*}
\min_{\substack{p_0,p_1\geq 0,\\p_0+p_1\leq
1}}\max\left\{\frac{1}{p_0},\frac{3/4}{p_0\cdot (1/2+\eps)+3/4\cdot
p_1},\frac{1}{p_0\cdot (1/2+\eps)+3/4\cdot p_1+1-p_0-p_1}\right\},
\end{equation*}
which converges to $19/12>1.58$ as $\eps\rightarrow
0$, which is better than~$3/2$.  We can still improve this bound by looking at
four instances and so on.

In order to squeeze the last drop out the lemon, we can look at $n+1$
instances like that and let $n$ tend to infinity.  The instances will
be all non-empty prefixes of the sequence
$(1/2+\epsilon,1/2+1/2n,1/2+2/2n,\ldots,1/2+n/2n)$; let $I_k$ denote the
prefix of length $k$, for
$k=0,\ldots,n$.  The above calculation becomes quite complicated as it
involves a rather complicated recurrence relation.  We will therefore
first imagine a continuous variant of the online \kp.
In order to avoid the $\epsilon$, we assume for the moment that an
item of size exactly $1/2$ can be packed only once into the knapsack.
Moreover, we assume that we see an item of size $1/2$ that gradually
grows in size until it reaches some maximum size $s\in[0,1]$ that is
unknown to the algorithm.  An
algorithm sees the item growing and can decide to pack it at any point
in time.  If the algorithm waits too long, however, it will be too late:  As
soon as its size reaches~$s$, the item disappears and the knapsack remains
empty.  As we consider a randomized algorithm, there will be a probability
of $p_0$ that it grabs the item at the very beginning, when its size is
only~$1/2$.  There is also some probability that it will grab the item
before it reaches some size~$x$.  Let $p(t)$ be the density function
of this probability; hence, the probability that the item is taken
before it reaches size~$x$ will be exactly
\[
p_0+\int_{1/2}^x p(t)\,dt\;.
\]
If we look at the instance with maximum size~$s$, the expected gain and
the competitive ratio in expectation are
\[
\frac{p_0}{2} + \int_{1/2}^s tp(t)\,dt\quad\text{and}\quad
s{\Bigm/}\Big(\frac{p_0}{2} + \int_{1/2}^s tp(t)\,dt\Bigr)\;.
\]

To minimize the maximum of the competitive ratio for all~$s$, we choose
$p(t)$ and $p_0$ in such a way that the competitive ratio is the same
for every possible~$s$.  If $s=1/2$, the competitive ratio
becomes~$1/p_0$.  We can determine $p(t)$ by solving the following
equation:
\[
x{\Bigm/}\Big(\frac{p_0}{2} + \int_{1/2}^x tp(t)\,dt\Bigr)=\frac{1}{p_0}\;,
\text{ for all }x\in[{\textstyle\frac12},1]
\]
or, equivalently,
\[
p_0x=\frac{p_0}{2} + \int_{1/2}^x tp(t)\,dt\;.
\]

Taking the derivative of~$x$ on both sides of the equation yields
$p_0=xp(x)$, and we have~$p(t)=p_0/t$ for $t\in[\frac12,1]$.  It
remains to determine~$p_0$.  To this end, we can use the additional
equation
\begin{equation}\label{eq:estimate_p0}
1=p_0+\int_{1/2}^1 p(t)\,dt=
p_0+\int_{1/2}^1 \frac{p_0}{t}\,dt=p_0(1+\ln 2) \;,
\end{equation}
which results from the fact that the total probability is $1$.

In this game of grabbing continuously growing items, the competitive ratio of our specific algorithm is $1+\ln 2$.
However, this was
just a thought experiment, which gives us a clue on how to choose
the probabilities $p_0,\ldots,p_n$ for a best possible algorithm of the
instances $I_0,\ldots,I_n$ defined earlier.  They should be approximately
$p_k\approx p(1/2+k/2n)/2n= p_0/(n+k)= 1/((1+\ln 2)(n+k))$,
where the first equality follows from $p(t)=p_0/t$ and the second equality
follows from~\cref{eq:estimate_p0}.
Moreover, we have to show that every other algorithm cannot be better
than this one.  To this end, we have to show that this choice of the values
$p_k$ is optimal and every other choice leads to a worse competitive
ratio for at least one of the instances.

\medskip

\begin{theorem}\label{thm:randlow}
The competitive ratio in expectation of every randomized algorithm
that solves the online simple unbounded \kp is at least $1+\ln 2>1.693$.
\end{theorem}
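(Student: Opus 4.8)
The plan is to make the continuous thought experiment above rigorous by working directly with the discrete instance family. First I would fix a large integer $n$ and a small $\eps\in(0,1/(2n))$, set $a_0=1/2+\eps$ and $a_k=1/2+k/(2n)$ for $1\le k\le n$, and let $I_k=(a_0,\dots,a_k)$ be the instance consisting of the first $k+1$ of these items. Since $\eps<1/(2n)$, we have $1/2<a_0<a_1<\dots<a_n=1$, so no two items fit together into the knapsack and therefore $\opt(I_k)=a_k$. Given any randomized algorithm $\textsc{Alg}$, let $p_j$ denote the probability that $\textsc{Alg}$ packs item $a_j$. Because $\textsc{Alg}$ is online and unaware of the instance length, its decision on the $j$-th item is a function only of $a_0,\dots,a_j$ and its coin tosses, so $p_j$ is the same regardless of which prefix $I_k$ (with $k\ge j$) we feed it. As every item exceeds $1/2$, the events ``$\textsc{Alg}$ packs $a_j$'' are pairwise disjoint, whence $\sum_{j=0}^n p_j\le 1$ and $\gain_{\textsc{Alg}}(I_k)=\sum_{j=0}^k p_j a_j=:S_k$.

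Next I would convert the requirement ``good competitive ratio on every $I_k$'' into a single inequality. Assuming $R:=\CR_{\textsc{Alg}}<\infty$ (otherwise there is nothing to prove), we have $S_k\ge\opt(I_k)/R=a_k/R$ for every $k$. Writing $p_k=(S_k-S_{k-1})/a_k$ with $S_{-1}:=0$ and summing by parts gives
\[
\sum_{k=0}^n p_k=\frac{S_n}{a_n}+\sum_{k=0}^{n-1}S_k\Bigl(\frac1{a_k}-\frac1{a_{k+1}}\Bigr),
\]
and all coefficients on the right-hand side are positive because the $a_k$ are strictly increasing. Substituting the bounds $S_k\ge a_k/R$ and using $\sum_{k=0}^n p_k\le 1$ then yields
\[
1\ \ge\ \frac1R\Bigl(1+\sum_{k=0}^{n-1}\frac{a_{k+1}-a_k}{a_{k+1}}\Bigr),
\qquad\text{that is,}\qquad
R\ \ge\ 1+\sum_{k=0}^{n-1}\frac{a_{k+1}-a_k}{a_{k+1}}.
\]

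It then remains to estimate the sum. For $1\le k\le n-1$ one has $(a_{k+1}-a_k)/a_{k+1}=\tfrac{1/(2n)}{(n+k+1)/(2n)}=1/(n+k+1)$, while the $k=0$ term equals $(1-2n\eps)/(n+1)$, so the whole sum is $\sum_{j=1}^n 1/(n+j)-2n\eps/(n+1)$. Since $R$ does not depend on $\eps$, letting $\eps\to 0$ gives $R\ge 1+\sum_{j=1}^n 1/(n+j)=1+(H_{2n}-H_n)$ for every $n$, and because $H_{2n}-H_n$ is an increasing sequence tending to $\ln 2$, we conclude $R\ge 1+\ln 2>1.693$.

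I expect the only genuinely delicate point to be the bookkeeping in the first paragraph: one has to argue carefully that a single number $p_j$ simultaneously governs the behaviour of $\textsc{Alg}$ on all the prefixes $I_k$ — this is exactly where we use that the online algorithm never learns $n$ — and that the packed-item events are mutually exclusive, which uses that every item is larger than $1/2$ and hence can be packed at most once and never alongside another. Once that is set up, the summation-by-parts bound is the rigorous counterpart of the heuristic ``choose $p(t)$ so that the competitive ratio is constant in $s$,'' and the remaining work is only the elementary harmonic-sum estimate $H_{2n}-H_n\to\ln 2$.
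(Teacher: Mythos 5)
Your proof is correct, and it takes a genuinely different route from the paper's. Both arguments use the same instance family (prefixes of a sequence of items just above $1/2$, so that at most one item can ever be packed and the expected gain on $I_k$ is $S_k=\sum_{j\le k}p_j a_j$ with $\sum_j p_j\le 1$), but the paper first exhibits an explicit algorithm whose ratio is $1+H_{2n}-H_n+O(\eps)$ on \emph{every} prefix and then shows, by iterating the probability-shifting step of \cref{claim:changeprobs}, that any algorithm with unequal ratios on the prefixes can be transformed into a no-worse one with equal ratios -- an equalization/exchange argument establishing optimality of the constructed algorithm on this family. You instead bound an arbitrary algorithm in one shot: writing $p_k=(S_k-S_{k-1})/a_k$, Abel summation with the positive coefficients $1/a_k-1/a_{k+1}$ turns the two constraints $\sum_k p_k\le 1$ and $S_k\ge a_k/R$ directly into $R\ge 1+\sum_{k}(a_{k+1}-a_k)/a_{k+1}$, which tends to $1+\ln 2$. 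This averaging (LP-duality-style) argument buys a shorter proof that avoids both the explicit construction of the optimal probabilities and the somewhat delicate iterative transformation and termination reasoning in the paper, and it makes transparent which constraints force the bound; what the paper's route buys in exchange is the matching algorithmic content -- the explicit probabilities $p_k\propto 1/(n+k)$ show the bound $1+H_{2n}-H_n$ is tight on this family and foreshadow the randomized upper bound of \cref{thm:randup}. The bookkeeping points you flag (consistency of the $p_j$ across prefixes because the algorithm is online and ignorant of $n$, disjointness of the packing events because every item exceeds $1/2$ and hence fits at most once and alone, and the harmless assumption $R<\infty$) are exactly the right ones and are handled correctly, as is the limit $\eps\to 0$ followed by $n\to\infty$.
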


\begin{proof}
We consider the instances
\begin{align*}
I_0 ={}& \Bigl(\frac12+\epsilon\Bigr),\\
I_1 ={}& \Bigl(\frac12+\epsilon,~\frac12+\frac{1}{2n}\Bigr),\\
I_2 ={}& \Bigl(\frac12+\epsilon,~\frac12+\frac{1}{2n},~
               \frac12+\frac{2}{2n}\Bigr),\\
I_3 ={}& \Bigl(\frac12+\epsilon,~\frac12+\frac{1}{2n},~
               \frac12+\frac{2}{2n},~\frac12+\frac{3}{2n}\Bigr),~\\
\vdots&\\
I_n ={}& \Bigl(\frac12+\epsilon,~\frac12+\frac{1}{2n},~
               \frac12+\frac{2}{2n},~\frac12+\frac{3}{2n},~
               \ldots\ldots, \frac12+\frac{n}{2n}\Bigr),
\end{align*}
where $n$ is large and $\epsilon$ is very small, say, $\epsilon=e^{-n}$.
We design an algorithm that will work as follows:  If it sees an item
whose size is in the interval $[1/2+k/2n,1/2+(k+1)/2n)$, where
$0\leq k\leq n$, it will accept it with a probability of $p_k$, where
\[
p_0=\frac{1}{1+H_{2n}-H_n} \text{ and }
p_k=\frac{1}{(1+H_{2n}-H_n)(n+k)}\;, \text{ for $k>0$}\;.
\]
We use the difference of Harmonic numbers $H_{2n}-H_n=\ln 2+O(1/n)$ instead of $\ln 2$ to make the sum
of all probabilities exactly one:
\[
\sum_{k=1}^n p_k = \frac{1}{1+H_{2n}-H_n}\sum_{k=1}^n \frac{1}{n+k} = 
\frac{1}{1+H_{2n}-H_n} (H_{2n}-H_n)=1-p_0\;.
\]

What is the expected gain of this algorithm on~$I_k$?  It will accept
the $i$th item of size $1/2+i/2n$ (possibly${}+\epsilon$) with a
probability of $1/\bigl((1+H_{2n}-H_n)(n+i)\bigr)$.
So the expected gain turns out to be exactly
\[
p_0\cdot\Bigl(\frac12+\epsilon\Bigr) + \frac{1}{1+H_{2n}-H_n}
   \sum_{i=1}^k\Bigl(\frac{n}{2n}+\frac{i}{2n}\Bigr)\frac{1}{n+i}
=\frac{1/2+k/(2n)+\epsilon}{1+H_{2n}-H_n}\;.
\]
The optimal gain for $I_k$ is $1/2+k/2n$ (or $1/2+\epsilon$ if
$k=0$).  Hence, the competitive ratio in expectation on \emph{every} $I_k$
is $1+H_{2n}-H_n+O(\epsilon)$, which goes to $1+\ln(2)$ as $n\rightarrow\infty$ and $\eps\rightarrow 0$.

Can there be a different algorithm with better competitive ratio?  If
yes, it has to beat our algorithm on \emph{every} instance $I_k$, since our algorithm has the same competitive ratio on those instances.
We will show that, for an algorithm whose competitive ratios differ on the
$I_k$'s, there is another, not worse algorithm whose competitive ratios
are the same.  We prove this by stepwise transforming such a non-uniform
algorithm.  So let us fix some algorithm for the online simple unbounded
\kp whose competitive ratio in expectation on $I_k$ is~$c_k'$
and whose probability of taking the $i$th item in $I_n$ is~$p_i'$.

The following result tells us what happens if we change an algorithm by
moving some probability from $p_i'$ to~$p_{i+1}'$.

\begin{claim}
\label{claim:changeprobs}
  Let $0\le i<n$. Let $p_i''=p_i-\delta$, $p_{i+1}''=p_i'+\delta$, and $p_j''=p_j'$ for
  $j\notin\{i,i+1\}$ for some $\delta\in\R$ with sufficiently small
  absolute value. Let $c_i''$ be the competitive ratios of the algorithm
  $A''$ that uses the probabilities~$p_i''$.

Then $c_j''=c_j'$, for $j=0,\ldots,i-1$.

If $\delta>0$, then $c_i''<c_i'$ and $c_j''>c_j'$, for $j>i$.

If $\delta<0$, then $c_i''>c_i'$ and $c_j''<c_j'$, for $j>i$.
\end{claim}
\begin{proof}
Obviously, $c_j''=c_j'$ for $j<i$.  Let us look at $c_i''$ and $c_i'$
for $\delta>0$:
\[
\let\~=\displaystyle
c_i''=\frac{1+i/n}{\!\!\~p_0+\sum_{j=1}^i\Bigl(1+\frac jn\Bigr)p_i''\!\!}
    =\frac{1+i/n}{\~p_0+\sum_{j=1}^i\Bigl(1+\frac jn\Bigr)p_i'+
    \Bigl(1+\frac in\Bigr)\delta}<
\frac{1+i/n}{\!\!\~p_0+\sum_{j=1}^i\Bigl(1+\frac jn\Bigr)p_i'\!\!}=c_i'
\]
A similar calculation can be done for $c_k''$ and $c_k'$ for $k>i$:
\begin{multline*}
\let\~=\displaystyle
c_k''=\frac{1+k/n}{\~p_0+\sum_{j=1}^k\Bigl(1+\frac jn\Bigr)p_i''}
    =\frac{1+k/n}{\~p_0+\sum_{j=1}^k\Bigl(1+\frac jn\Bigr)p_i'+
    \Bigl(1+\frac in\Bigr)\delta-\Bigl(1+\frac{i+1}n\Bigr)\delta}\\
    \let\~=\displaystyle
    =\frac{1+k/n}{\~p_0+\sum_{j=1}^k\Bigl(1+\frac jn\Bigr)p_i'-
    \Bigl(1+\frac 1n\Bigr)\delta}>
\frac{1+k/n}{\~p_0+\sum_{j=1}^k\Bigl(1+\frac jn\Bigr)p_i'}=c_k'
\end{multline*}

The calculation for $\delta<0$ is completely analogous.
\end{proof}

Using \cref{claim:changeprobs}, it is relatively easy to modify an algorithm
whose competitive ratios $c_i'$ differ from another.  We look at
those $c_i'$ that are maximum.  Let $c_i'$ be the last one that has
maximal value.  If $i<n$, we can use a small $\delta>0$ and increase
$p_i'$ by $\delta$ and decrease $p_{i+1}'$ by~$\delta$.  If $\delta$ is
small enough, then $c_i''<c_i'$ while still $c_i''>c_j''$ for $j>i$.
If $c_i'$ was the only maximum ratio, we improved the algorithm.  If
not, it is no longer the last one and we can apply the same procedure
again.  Eventually, the algorithm will improve.

There is, however, the other possibility that the last ratio $c_n'$
is a maximum one.  Then $c_{n+1}'$ does not exist and we cannot do the
above transformation.  In this case, however, we choose the last
ratio $c_i'$ where $c_i'<c_n'$ (so $c_{i+1}'=c_{i+2}'=\cdots=c_n'$).
We decrease $p_i'$ by a small $\delta$ and increase $p_{i+1}'$ by the
same~$\delta$.  If $\delta$ is small enough then still $c_i''<c_j''$
and $c_j''<c_j'=c_n'$ for all $j>i$.  Again, either the new algorithm
has a better competitive ratio or the maximal $c_i''$ is now not at
the right end and the first case applies.

In this way, after finitely many steps, we either get to an algorithm
that is better or an algorithm where the competitive ratios agree on all instances $I_0,\dots,I_n$.
\end{proof}
Note that what an algorithm must do on this family of instances is choose the largest of a series of items. This problem is known as the Online Search Problem and has been studied in much detail \cite{BorodinE1998}. In the classic variant of the problem, an algorithm would be allowed (or forced) to keep the final item presented, while in our model, a gain of $0$ is possible. Thus, any lower bound for the Online Search Problem on items in the interval $]1/2,1]$ also holds for the online unbounded \kp. However, even deterministic search algorithms on that interval can achieve a competitive ratio of $\sqrt{2}< 1.415$ (see for example Chapter 14.1.2 of \cite{BorodinE1998}). The lower bound we provide is considerably stronger.
We now complement this bound by a rather close upper bound.

\medskip

\begin{theorem}
\label{thm:randup}
There is a randomized algorithm that solves the online simple unbounded \kp with
a competitive ratio of less than $1.7353$. 
\end{theorem}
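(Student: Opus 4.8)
The plan is to analyse a randomised threshold algorithm. For an item of size $s$ let $g(s)=\lfloor 1/s\rfloor\cdot s$ be the fraction of the knapsack filled by packing copies of this item alone; one has $g(s)\in(1/2,1]$ always, $g(s)=s$ for $s>1/2$, and $g(s)>2/3$ whenever $s\le 1/2$. The algorithm draws once a random threshold $\theta$ from a fixed distribution $D$ on $[1/2,1]$, then scans the items and, at the first item $x_i$ with $g(s_i)\ge\theta$, packs $\lfloor 1/s_i\rfloor$ copies of it and stops; if no such item ever appears it packs nothing. Using $g(\cdot)$ rather than the raw size is crucial: a tiny item still yields $g>2/3$ and should be grabbed almost as eagerly as a near-full one, whereas an item of size slightly above $1/2$ is ``bad''. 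The distribution $D$ will be a mixture of an atom at $1/2$ (cheap insurance against an instance of length~$1$), a $1/t$-shaped density on $(1/2,1]$, and atoms at a few points of $(1/2,1)$ such as $2/3$; its parameters are tuned to push the competitive ratio below $1.7353$.

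The first step is to reduce to a clean family of worst-case instances. For a fixed $\theta$ the algorithm packs precisely the item at which the running maximum of $g(s_1),g(s_2),\dots$ first reaches $\theta$, so $\gain_{\textsc{Alg}}(I)$ depends only on the increasing sequence of \emph{record values} $r_1<r_2<\dots<r_L\le 1$ attained by this running maximum; concretely $\gain_{\textsc{Alg}}(I)=D(r_1)\,r_1+\sum_{l=2}^{L}\bigl(D(r_l)-D(r_{l-1})\bigr)\,r_l$ with $D(t)=\P[\theta\le t]$. The second step is to bound $\opt(I)$ using the same records. Since no two items of size $>1/2$ fit together, an optimal solution is either many copies of one item (value at most the largest record $r_L$) or a feasible combination using at most one item of size $>1/2$; analysing which sizes can occur — in particular the ranges just above $1/k$ for $k\ge 2$ — bounds $\opt(I)$ by an explicit function of the record sequence. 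Combining the two estimates reduces the theorem to the purely analytic task of choosing $D$ so that $\opt(I)/\gain_{\textsc{Alg}}(I)<1.7353$ for every admissible record configuration.

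Two antagonistic families of near-tight instances drive the choice of $D$. The first is the ``continuously growing'' family of \cref{thm:randlow} — sizes just above $1/2$, increasing up to some $s^\ast\in(1/2,1]$, then stopping — on which equalising the ratio over all $s^\ast$ forces the density of $D$ proportional to $1/t$ and gives exactly $1+\ln 2\approx 1.693$; no choice of $D$ beats this, which recovers the lower bound of \cref{thm:randlow} and explains the $1/t$-like density. The second family consists of the ``combination'' instances: grow up to some $s^\ast$ and then reveal the complementary item of size $1-s^\ast$, so that $\opt=1$ while the only records are those up to $s^\ast$ plus the single extra value $g(1-s^\ast)$, which is barely above $s^\ast$ precisely when $s^\ast\to k/(k+1)$ for small $k$; this forces additional probability mass concentrated at those points (the atoms), which in turn slightly degrades the growing instances. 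Balancing the two pressures — and checking that the grow-to-$1$ instance stays strictly below the target, so that only finitely many such atoms are needed — yields the stated bound $<1.7353$; once $D$ is fixed, verifying the inequality is a finite computation over the regimes of the record sequence.

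The main obstacle is the opt-versus-records estimate for \emph{arbitrary} instances, not merely the parametric ones: one must rule out that an adversary extracts anything extra from combinations of three or more items, from sizes just above $1/k$ for larger $k$, or from interleaving several growth-and-complement phases, so that $\opt(I)$ really is controlled by the record sequence. This is exactly where the constant rises above $1+\ln 2$, and making the case analysis tight enough to certify a concrete number (rather than just ``some constant below~$2$'', which follows already from \cref{thm:dettwo}) is the delicate part. A secondary subtlety is how many copies of a committed small item to pack — packing greedily is optimal against instances that stop immediately but wasteful against instances that later reveal a large complement — and the resolution (possibly an additional randomisation over the number of copies) has to be folded into the optimisation of $D$.
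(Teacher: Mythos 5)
Your high-level design is the same as the paper's: a single random threshold compared against $x^*=\lfloor 1/x\rfloor\cdot x$, drawn from a mixture of an atom at $1/2$, a $1/t$-shaped density, and an atom at $2/3$. But your algorithm commits to the first item with $g(s_i)\ge\theta$ and then \emph{stops}, and this is a genuine gap, not a detail: no distribution $D$ can bring such a commit-and-stop threshold algorithm below $3/2+\ln(4/3)\approx 1.7877$, so the claimed bound $<1.7353$ is unreachable along your route. To see this, consider (i) the pure growing instances (sizes creeping up from $1/2$ to some $s<2/3$), which force $\int_{[1/2,s]}\theta\,\mathrm{d}D(\theta)\ge s/c$ for every $s<2/3$, and (ii) the instance that grows to $2/3-\delta$ and then reveals the complementary item $1/3+\delta$ (whose $g$-value is $2/3+2\delta$): here the optimum is $1$, but every threshold $\theta\le 2/3-\delta$ yields gain at most $2/3$ because your algorithm has already stopped and cannot add the complement, and every threshold above $2/3+2\delta$ yields $0$; in the limit this forces $\int_{[1/2,2/3]}\theta\,\mathrm{d}D(\theta)\ge 1/c$ with all contributing values capped at $2/3$. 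Since a unit of value placed at $\theta$ costs probability $1/\theta$ (cost $2$ at $1/2$, cost $3/2$ at $2/3$) and the growing constraints pin the cumulative value profile at $s/c$, the cheapest feasible $D$ uses total probability at least $\bigl(1+\ln(4/3)+\tfrac12\bigr)/c$, whence $c\ge 3/2+\ln(4/3)>1.7353$. Randomizing the number of packed copies, your proposed escape hatch, only lowers the gain on these instances and does not help.

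The paper avoids this trap by letting the algorithm keep packing \emph{after} its commitment: it packs the first item with $x^*\ge X$ as often as possible and then greedily packs every later item that still fits. That continuation is exactly what defuses the obstacle you single out as the main difficulty (controlling $\opt$ by the record sequence for arbitrary instances): since $\opt\le 1$ always, it suffices to observe that whenever some item of size $\xmin<1/2$ appears and $X\le 1-\xmin$, the greedy continuation guarantees a final load of at least $1-\xmin$ regardless of what was packed before, and the whole analysis collapses into four short cases ($\xmax<2/3$; $\xmax\ge 2/3$ with $\xmin\ge 1/2$; $1/3<\xmin<1/2$; $\xmin\le 1/3$), each giving ratio at most $1/\phalf<1.7353$ via the two defining equations of the distribution. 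So the loss in your scheme never comes from packing too many copies of the committed item but from refusing to pack later items. Finally, even apart from this, your write-up fixes neither $D$ nor the constant: the ``finite computation over record configurations'' that is supposed to certify $<1.7353$ is precisely the quantitative core that remains to be done.
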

\begin{proof}
The algorithm computes a random variable $X$ with the following distribution:
\begin{itemize}
	\item $\Pr[X=1/2]=\phalf$
	\item $f_X(x)=\phalf/x$ \quad if $1/2<x<2/3$
	\item $\Pr[X=2/3]=\ptwo$
	\item $f_X(x)=\phalf\cdot(1+\ln(2-x))/2x$ \quad
              if $2/3<x\leq 1$
\end{itemize}
where $\phalf$ and $\ptwo$ solve the system of equations
\begin{align}
\label{eq:randomprob1}
\phalf&=\frac{2\ptwo}{3}+\frac{2\phalf}{3}+\frac{2\phalf}{3}\cdot\ln(4/3)\\
\label{eq:randomprob2}
1&=\phalf+\phalf\cdot\ln(4/3)+\ptwo+\phalf\cdot\int_{2/3}^1\frac{1+\ln(2-x)}{2x}\,dx
\end{align}

The fact that this is indeed a
probabilistic distribution is guaranteed by \Cref{eq:randomprob2}, while \cref{eq:randomprob1} will be
used to prove bounds on the competitive ratio.
After choosing $X$, the
algorithm packs as often as possible the first item $x$ it encounters with $x^*\geq X$, where $x^*=\lfloor 1/x\rfloor \cdot x$ is defined as the gain that can be achieved by the item $x$ alone. After that, it packs any item
that fits greedily.

We now look at a fixed instance $I=(x_1,\dots,x_n)$. We define $\xmin=\min_i \{x_i\}$ and $\xmax=\max_i \{x_i^*\}$.
If $\xmax<2/3$, there cannot be any items smaller than $1/2$ (packing
these items as often as possible would lead to $\xmax>2/3$). The optimal solution is thus $\xmax$. The algorithm has a gain of at least $X$ if $X\leq \xmax$, so an expected gain of at least 
\[
\frac{\phalf}{2}+\int_{1/2}^{\xmax} x\cdot f_X(x)\d x=
\frac{\phalf}{2}+\phalf\cdot\Bigl(\xmax-\frac{1}{2}\Bigr)=\phalf\cdot \xmax.
\]
Its competitive ratio in expectation is thus at most $1/\phalf$.

If $\xmax\geq2/3$ and $\xmin\geq1/2$, the optimal solution is still
$\xmax$. The algorithm has a gain of at least $X$ if $X\leq \xmax$, and
thus a competitive ratio of at most
\begin{multline*}
\xmax\!\biggm/\!\!\biggl(\frac{\phalf}{2}+\int_{1/2}^{2/3} x\cdot
f_X(x)\,dx+\frac{2p_{2/3}}{3}
+\int_{2/3}^{\xmax}x\cdot f_X(x)\,dx\biggr)\\
=\xmax\!\biggm/\!\!\biggl(\frac{2\phalf}{3}+\frac{2p_{2/3}}{3}
+\int_{2/3}^{\xmax}x\cdot f_X(x)\,dx\biggr).
\end{multline*}
It can be shown using standard techniques from calculus that this ratio increases in $\xmax$. This is done in full detail in \cref{lemma:increase} in the appendix. Hence this bound is maximized for $\xmax=1$ and is therefore at most
\begin{multline*}
\biggl(\frac{2\phalf}{3}+\frac{2\ptwo}{3}+\frac{\phalf}{2}
\cdot\int_{2/3}^1 (1+\ln(2-x))\,dx\biggr)^{-1}\\
=\biggl(\frac{2\phalf}{3}+\frac{2\ptwo}{3}
+\frac{2\phalf}{3} \cdot \ln(4/3)\biggr)^{-1}
=\frac{1}{\phalf}
\end{multline*}
by \cref{eq:randomprob1}.

If $1/3<\xmin<1/2$ and $X\leq 1-\xmin$, the algorithm will always achieve a gain of at least $1-\xmin$. When it encounters $\xmin$ it either has not packed any items yet, in which case it packs $\xmin^*=2\xmin\geq 1-\xmin\geq X$, or it has already packed other items. In this case, it packs $\xmin$ greedily as often as possible and still achieves a gain of at least $1-\xmin$. If on the other hand $1-\xmin<X\leq 2\xmin$, the algorithm will always achieve a gain of at least $X$. When it encounters $\xmin$ it has already packed items of size at least $X$ or it will be able to pack $\xmin^*=2\xmin\geq X$. Its expected gain is therefore at least
\begin{multline*}
(1-\xmin)\cdot\Pr[X\leq 1-\xmin]+E[X\mid 1-\xmin< X\leq 2\xmin]=\\
(1-\xmin)\biggl(\int_{1/2}^{1-\xmin}\!\!\!\!\!\!f_X(x)\,dx+p_{1/2}\biggr)
+\int_{1-\xmin}^{2/3}\!\!\!\!\!\!\!\!
xf_X(x)\,dx+\frac{2p_{2/3}}{3}+\int_{2/3}^{2\xmin}\!\!\! xf_X(x)\,dx.
\end{multline*}
With $y=1-\xmin$, we can simplify this to
\begin{multline*}
y\cdot\Bigl(\phalf\cdot \ln(2y)+\phalf\Bigr)
+\phalf\Bigl(\frac{2}{3}-y\Bigr)
+\frac{2p_{2/3}}{3}+\int_{2/3}^{2-2y} xf_X(x)\,dx=\\
=y\cdot\Bigl(\phalf\cdot \ln(2y)+\phalf\Bigr)
+\phalf\Bigl(\frac{2}{3}-y\Bigr)+\frac{2p_{2/3}}{3}
+\frac{\phalf}{2}\int_{2/3}^{2-2y} \Bigl(1+\ln(2-x)\Bigr)\,dx\\
=y\cdot\Bigl(\phalf\cdot \ln(2y)+\phalf\Bigr)
+\phalf(\frac{2}{3}-y)+\frac{2p_{2/3}}{3}
+\frac{\phalf}{2}\biggl(-2y\ln(2y)
+\frac{4}{3}\ln\Bigl(\frac{4}{3}\Bigr)\biggr)\\
=\frac{2\ptwo}{3}+\frac{2\phalf}{3}
+\frac{2\phalf}{3}\cdot\ln\Bigl(\frac43\Bigr)=\phalf
\end{multline*}
by \cref{eq:randomprob1}. Its competitive ratio is therefore
at most $1/\phalf$.

If $\xmin\leq1/3$, we argue similarly to the previous case. If $X\leq
1-\xmin$, the algorithm achieves a gain of at least $1-\xmin$ and if
$1-\xmin\leq X\leq \lfloor 1/\xmin\rfloor\xmin$, it has a gain of at
least $X$. Again with $y=1-\xmin$ and since $y\geq 2/3$, this leads to an expected gain of at least 
\begin{multline*}
y\biggl(\phalf+\int_{1/2}^{2/3}f_X(x)\,dx+\ptwo+\int_{2/3}^{y}f_X(x)\,dx\biggr)
+\int_{y}^{\lfloor 1/\xmin\rfloor\xmin} xf_X(x)\,dx\\
\geq y\biggl(\phalf+\int_{1/2}^{2/3}f_X(x)\,dx+\ptwo\biggr)
\geq\frac{2}{3}\biggl(\phalf+\phalf\ln\Bigl(\frac{4}{3}\Bigr)+\ptwo\biggr)
=\phalf
\end{multline*}
by \cref{eq:randomprob1}.

So in each case the competitive ratio of the algorithm is at most
$1/\phalf$.  Solving the system of \cref{eq:randomprob1,eq:randomprob2} for $\phalf$ and $\ptwo$ leads to 
\[
\tabskip=0pt\def\!{\displaystyle}
\vbox{\halign{\hfil$\!#$&\hfil$\!#$&$\!#$\hfil\cr
\phalf=&2\;&\biggl(3+\int_{2/3}^1 \frac{1+\ln(2-x)}{x}\,dx\biggr)^{-1}\cr
\ptwo=&\Bigl(1-2\ln\Bigl(\frac{4}{3}\Bigr)\Bigr)
       &\biggl(3+\int_{2/3}^1 \frac{1+\ln(2-x)}{x}\,dx\biggr)^{-1}\cr
}}
\]
so the algorithm has a competitive ratio of at most 
\begin{equation*}
\frac{1}{\phalf}=\frac12\biggl(3+\int_{2/3}^1
\frac{1+\ln(2-x)}{x}\,dx\biggr)< 1.7353.
\end{equation*}
\end{proof}

\section{Advice Complexity}
\label{sec:advice}

Recall that in the online simple 0-1 \kp the competitive ratio turned out to be unbounded without advice and
$2$~for one and up to $o(\log n)$ many advice bits.  With $O(\log n)$
advice bits, the competitive ratio is arbitrarily close
to~$1$~\cite{BockenhauerKKR2014}.
For the online simple unbounded \kp, the situation is almost analogous.  With one
advice bit, the competitive ratio is~$3/2$ and again stays at $3/2$
for $o(\log n)$ advice bits.  Then, with $O(\log n)$ many advice bits
we again come arbitrarily close to~$1$.

\medskip

\begin{theorem}
\label{thm:adviceuptolog}
There is an algorithm for the online simple unbounded \kp that reads only a single bit of advice and achieves a competitive ratio of $3/2$. For any fixed number $k>0$ of advice bits, this ratio cannot be improved. To achieve a competitive ratio of better than $3/2$ on all instances of length $n$, an algorithm must read more than $\log(n-1)$ advice bits on at least one such instance.

\end{theorem}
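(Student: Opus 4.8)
The statement has three parts, and I would attack them in the order they are stated. For the \emph{upper bound} (one advice bit suffices for ratio $3/2$), the natural idea is to let the single advice bit encode which of two simple strategies is best: either ``pack the first item as often as it fits'' or ``skip the first item and then pack the first \emph{subsequent} item as often as it fits''. If the first item already yields gain $\ge 2/3$ when packed as often as possible, the oracle sends the ``take first'' bit; otherwise the first item contributes at most $1/3$ to any packing (either it is small and packed repeatedly, or it is in $(1/3,1/2]$), and one checks that in that case some later item, packed greedily, must give gain $\ge 2/3$ against the optimum. The bookkeeping here is the usual case split on the size of $\xmin$ versus $1/3$ and $1/2$, exactly as in the proof of \cref{thm:randup}; the advice bit just picks the right branch.

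For the \emph{first lower bound} (no finite number $k$ of advice bits beats $3/2$), I would reuse the two-instance family $I_1=(1/2+\eps)$, $I_2=(1/2+\eps,1)$ from \cref{thm:dettwo}, but now with $2^k$ copies scaled so that a single advice string of $k$ bits cannot distinguish enough of them. Concretely, build a collection of $2^k+1$ instances (say prefixes of $(1/2+\eps,\,1/2+\eps,\dots)$ padded with a final item of size $1$ as in the randomized lower bound, but discretized coarsely) so that by pigeonhole two instances with different optimal behaviour receive the same advice string; on that pair the deterministic algorithm behaves identically and one of the two forces ratio $\ge 3/2-\eps$. An alternative, cleaner route is: with $k$ advice bits the algorithm is a mixture of at most $2^k$ deterministic algorithms, and the $3/2$ lower bound from the randomized argument (the two-instance min--max computation giving $\min_{p_0}\max\{1/p_0,\,1/(1-p_0(1/2-\eps))\}=3/2-\eps$) applies to \emph{any} distribution over deterministic algorithms, in particular a uniform one over the $2^k$ advice branches — so the bound is inherited for free. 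I expect to present this second route since it is essentially already done in the text.

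For the \emph{second lower bound} (beating $3/2$ on instances of length $\le n$ needs $>\log(n-1)$ advice bits), I would sharpen the instance family so that there are $n-1$ genuinely different ``correct answers'' among instances of length $\le n$, forcing the advice string to take $n-1$ distinct values and hence have length $>\log(n-1)$. The family is again prefixes of a sequence $(1/2+\eps,\,s_1,\,s_2,\dots)$ with $s_j$ slowly increasing and each $s_j>1/2$, chosen so that on the prefix ending in $s_j$ the only way to get ratio $<3/2$ is to pack $s_j$ and nothing earlier; since the algorithm must commit to packing-or-skipping each $1/2+\eps$-ish item before seeing the rest, two prefixes with the same advice that disagree on the ``stopping point'' produce a ratio $\ge 3/2$ on at least one of them. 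Counting the number of viable stopping points among length-$\le n$ instances gives $n-1$, hence $\lceil\log(n-1)\rceil$ bits are insufficient and one needs strictly more than $\log(n-1)$.

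The main obstacle, I expect, is the second lower bound: getting the constant exactly right (``$\log(n-1)$'' rather than a sloppier $\Omega(\log n)$) requires choosing the instance family so that \emph{precisely} $n-1$ advice values are forced and that the gap to $3/2$ is strict for each of them, which means the $\eps$-perturbations and the spacing of the $s_j$ must be coordinated carefully — essentially a discrete version of the continuous ``growing item'' argument, but where we only need to rule out ratio $<3/2$ rather than compute an optimal density. The first two parts are routine given the machinery already in the paper.
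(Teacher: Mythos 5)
Your upper bound does not work as stated. The two branches you propose --- ``pack the first item as often as it fits'' and ``skip the first item and pack the first subsequent item as often as it fits'' --- fail together on, e.g., $I=(0.55,\,0.51,\,0.9)$: the optimum is $0.9$, the first branch gains $0.55$ and the second gains $0.51$, so both ratios exceed $3/2$ and no choice of the single advice bit helps. The accompanying case analysis is also off: if packing the first item as often as possible yields less than $2/3$, then that item lies in $(1/2,2/3)$, since any item $x\leq 1/2$ packed $\lfloor 1/x\rfloor$ times already fills more than $2/3$ of the knapsack; so the ``first item contributes at most $1/3$'' case never arises. What is needed (and what the paper does) is a \emph{threshold-waiting} branch: wait for the first item of size $\leq 1/2$ or $\geq 2/3$ and pack it as often as possible (this always yields gain $\geq 2/3$ when such an item exists), paired with plain greedy for the remaining case where all items lie strictly between $1/2$ and $2/3$, where the optimum is at most $2/3$ and greedy packs at least $1/2$.

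Both lower-bound routes also have genuine gaps. The ``cleaner route'' is invalid: an algorithm with $k$ advice bits is \emph{not} a distribution over its $2^k$ branches --- the oracle picks the best branch for each instance --- so the randomized $3/2$ bound does not transfer; indeed, on your pair $(1/2+\eps)$ and $(1/2+\eps,1)$ a single advice bit already gives optimality, and the theorem itself ($3/2$ with one bit, below the randomized bound $1+\ln 2$ of \cref{thm:randlow}) shows advice and randomization are incomparable here. The pigeonhole route is the right idea, but your instance family (nested prefixes of an increasing sequence of items all larger than $1/2$) cannot force $3/2$: on such a family every useful deterministic branch is just a stopping rule, and already against one advice bit the adversary can only force roughly $\sqrt{2}$ (the online-search phenomenon discussed after \cref{thm:randlow}), not $3/2$. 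The paper instead uses $I_k=(1/3+\eps,\,1/3+\eps^2,\ldots,1/3+\eps^{k-1},\,2/3-\eps^{k-1})$ for $k=2,\ldots,n$: every $I_k$ has optimum exactly $1$, but any behaviour other than rejecting the first $k-2$ items and accepting the $(k-1)$-st together with the closing item caps the gain at about $2/3$ (at most two of the $\approx 1/3$ items or one $\approx 2/3$ item fit, and only $\eps^{j}\leq\eps^{k-1}$, i.e.\ $j= k-1$, pairs with the last item); since the $I_k$ share prefixes, fewer than $\log(n-1)$ advice bits cannot separate the $n-1$ required behaviours, which yields both the fixed-$k$ and the $\log(n-1)$ statements with the correct constant $3/2$.
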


\begin{proof}
Let us first see why we can achieve a competitive ratio of $3/2$ with a
single advice bit.  An algorithm can use the advice bit to choose between
two strategies.  Here, those two strategies are: (1)~pack greedily and
(2)~wait for a item that is either smaller than or equal to ~$1/2$ or is
greater than or equal to $2/3$ and pack it as often as possible.

We have to show that, for every possible instance, one of the two
strategies manages to pack at least a $2/3$-fraction of the optimal
solution.  So let us assume that there indeed exists at least one item
for which strategy~(2) is waiting.  If its size is at least $2/3$, the
algorithm has a gain of at least $2/3$.  If its size $s$ is smaller than or equal to $1/2$,
the same is true:  After packing the item as often as possible,
the remaining space in the knapsack will be smaller than $s$, but will
contain at least two items of size~$s$.  So if $s\geq 1/3$, the two items
sufficiently fill the knapsack, and if $s\leq 1/3$, the remaining space is less than~$1/3$. In either case, the algorithm achieves a gain of at least $2/3$ and thus a competitive ratio of at most $3/2$.

So strategy~(2) succeeds whenever the instance contains an item
of size $s$ with $s\leq 1/2$ or $s\geq2/3$.  Let us assume now that the
input contains no such item.  Then strategy~(1) succeeds:  All items have
sizes strictly between $1/2$ and~$2/3$.  This means in particular that
only one item fits into the knapsack and the optimal solution fills the
knapsack at most to a level of~$2/3$.  The greedy algorithm obviously
fills the knapsack at least half.  The competitive ratio is then
at most $(2/3)/(1/2)=4/3$.

In order to prove a lower bound, let $0<\epsilon<1/2$ and we consider
$n-1$ different instances $I_2,\ldots,I_n$ where
\[
I_k=\Bigl(\frac13+\epsilon,\frac13+\epsilon^2,\frac13+\epsilon^3\ldots,
     \frac13+\epsilon^{k-2},
     \frac13+\epsilon^{k-1},\frac23-\epsilon^{k-1},\dots,\frac23-\eps^{k-1}\Bigr)\,,
\]
ending in $n-k+1$ items of size $2/3-\eps^{k-1}$, such that each instance consists of $n$ items.
It is easy to see that the optimal gain for every instance is~$1$.
To be optimal on $I_k$,
the algorithm has to reject the first $k-2$ items, and pack the $(k-1)$-th item, as well as one of the larger items of size $2/3-\eps^{k-1}$.
Intuitively, an algorithm that receives fewer than $\log(n-1)$ advice bits cannot choose
between $n-1$ or more different strategies. More formally, assume that the algorithm reads fewer than $\log(n-1)$ advice bits on all $n-1$ instances. Since there are fewer than $n-1$ binary strings of length less than $\log(n-1)$, at least two instances $I_i$ and $I_j$ for $2\leq i<j\leq n$ must receive the same advice string. Any decision of the algorithm, including the amount of advice bits to read, can only depend on the prefix of the instance and the advice bits read at that point. This means that, after the common prefix $(1/3+\eps,\dots,1/3+\eps^{i-1})$, the algorithm must make the same decision in $I_i$ and $I_j$. However, to be optimal on $I_i$, it must take the item of size $1/3+\eps^{i-1}$, while to be optimal on $I_j$, it must not take it.
Therefore, it will behave suboptimally on at least one of the two instances.  For that
instance, its competitive ratio will be at least $1/(2/3+2\epsilon)$,
because the algorithm can pack only two small items or one large one.
As $\epsilon$ can be chosen arbitrarily small, the competitive ratio of any
algorithm cannot be better than~$3/2$.

\end{proof}
\begin{corollary}
\label{cor:guessadvice}
For any $0<p<1$, there is a randomized algorithm for the online simple unbounded \kp that makes a single random decision with probabilities $p$ and $1-p$ and has a competitive ratio of at most
\[
\max\biggl\{\Bigl(p\cdot\frac{1}{2}+(1-p)\cdot
\frac{2}{3}\Bigr)^{-1},~\frac{2/3}{p/2}\biggr\}.
\]
\end{corollary}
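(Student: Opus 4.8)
The plan is to reuse, essentially verbatim, the two strategies from the proof of \cref{thm:adviceuptolog} and to let the algorithm choose between them by a single biased coin flip rather than by an advice bit. Concretely, before seeing any item the algorithm flips a coin: with probability $p$ it runs strategy~(1) (pack every arriving item as often as it currently fits, i.e.\ the greedy strategy), and with probability $1-p$ it runs strategy~(2) (wait for the first item of size at most $1/2$ or at least $2/3$ and pack that item as often as possible). By linearity of expectation, on any instance $I$ the expected gain equals $p$ times the gain of strategy~(1) on $I$ plus $(1-p)$ times the gain of strategy~(2) on $I$, so it suffices to insert the gain guarantees already established for the two strategies and compare with $\opt(I)$.

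First I would distinguish the same two cases as in \cref{thm:adviceuptolog}. If $I$ contains at least one item of size at most $1/2$ or at least $2/3$, then strategy~(2) achieves a gain of at least $2/3$ (as shown in \cref{thm:adviceuptolog}) while strategy~(1) achieves a gain of at least $1/2$ (as shown in \cref{thm:dettwo}); since $\opt(I)\le 1$, the competitive ratio on $I$ is then at most $\bigl(p\cdot\tfrac12+(1-p)\cdot\tfrac23\bigr)^{-1}$. In the remaining case every item of $I$ lies strictly between $1/2$ and $2/3$, so only a single copy of a single item fits and hence $\opt(I)<2/3$; here strategy~(1) still packs its first item once for a gain exceeding $1/2$, whereas strategy~(2) keeps waiting forever and contributes nothing, so the expected gain is at least $p/2$ and the competitive ratio on $I$ is at most $(2/3)/(p/2)$. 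Taking the supremum over all instances yields the stated maximum of the two expressions.

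There is no genuinely hard step: the statement is a re-packaging of the analysis behind \cref{thm:adviceuptolog}. The only point that needs care is the second case, where strategy~(2) can be completely useless, so its probability $1-p$ must not be allowed to enter the denominator of the corresponding bound; only the greedy strategy's guarantee of $1/2$, weighted by $p$, survives there. Getting the two weightings right — greedy with probability $p$, the waiting strategy with probability $1-p$ — is exactly what makes the two terms of the maximum come out as claimed.
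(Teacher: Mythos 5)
Your proposal is correct and follows essentially the same route as the paper: simulate the one-advice-bit algorithm of \cref{thm:adviceuptolog}, guessing the bit with probabilities $p$ (greedy) and $1-p$ (waiting strategy), and split into the same two cases, with the greedy guarantee of $1/2$ and the waiting strategy's guarantee of $2/3$ giving the first term, and the case where all items lie strictly between $1/2$ and $2/3$ (so $\opt\leq 2/3$ and only the greedy branch contributes) giving the second term.
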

\begin{proof}
The algorithm simulates the algorithm with one advice bit in the proof of
\cref{thm:adviceuptolog} and guesses the advice bit. With a probability of
$p$, it packs items greedily and, with a probability of $1-p$, it waits for
an item $x_i$ with $x_i\leq 1/2$ or $x_i\geq 2/3$ and then packs $x_i$ as
often as possible. If there is such an item, its competitive ratio in
expectation is at most
\[
1{\Bigm/}\Bigl(p\cdot\frac{1}{2}+(1-p)\cdot\frac{2}{3}\Bigr).
\]
If there is no such item, the optimal solution is at most $2/3$ and the
algorithm's competitive ratio in expectation is at most
$(2/3){\bigm/}(p/2)$.
\end{proof}

\begin{corollary}
\label{cor:twobits}
There is a randomized algorithm for the online simple unbounded \kp that uses exactly two uniformly random bits and has a competitive ratio of at most $24/13$.
\end{corollary}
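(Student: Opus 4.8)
The plan is to derive \cref{cor:twobits} directly from \cref{cor:guessadvice} by taking the bias of the single random decision to be $p = 3/4$. This probability is realizable exactly with two uniformly random bits: the four outcomes of the two bits are equiprobable, so we let the algorithm of \cref{cor:guessadvice} pack greedily on three of them and wait for an item of size at most $1/2$ or at least $2/3$ (and then pack it as often as possible) on the fourth. This makes the random decision with probabilities $3/4$ and $1/4$, as required, and uses exactly two random bits.

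It then remains to substitute $p = 3/4$ into the bound of \cref{cor:guessadvice}. The two quantities inside the maximum become
\[
\left(\frac34\cdot\frac12+\frac14\cdot\frac23\right)^{-1}=\left(\frac{3}{8}+\frac{1}{6}\right)^{-1}=\frac{24}{13}
\qquad\text{and}\qquad
\frac{2/3}{(3/4)/2}=\frac{2/3}{3/8}=\frac{16}{9},
\]
and since $24/13>16/9$, \cref{cor:guessadvice} yields a competitive ratio of at most $24/13$ for this algorithm, which is exactly the claim.

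I would also add a brief remark justifying why $p=3/4$ is the best choice among the biases realizable with two random bits, namely $p\in\{0,1/4,1/2,3/4,1\}$. Writing the two terms of the bound as $6/(4-p)$, which is increasing in $p$, and $4/(3p)$, which is decreasing in $p$, their maximum over $p\in(0,1)$ is minimized at the crossing point $p=8/11$, where both equal $11/6$; a direct check of the four admissible nonzero values then shows $p=3/4$ is the minimizer among them, with value $24/13$ (for comparison, $p=1$ merely recovers the $2$-competitive greedy algorithm, and $p=1/2$ gives $8/3$). Apart from this small optimization over the admissible biases and the elementary arithmetic, there is no real obstacle here: the substance is carried entirely by \cref{cor:guessadvice}, which is already established.
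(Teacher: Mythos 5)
Your proof is correct and follows essentially the same route as the paper: realize the bias $p=3/4$ with two uniform bits and plug it into \cref{cor:guessadvice}, which gives $\max\{24/13,\,16/9\}=24/13$. The extra remark on optimizing over the realizable biases is fine (and consistent with \cref{cor:onechoice}) but not needed for the claim.
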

\begin{proof}
The algorithm uses two uniformly random bits to generate a random decision with probability $p=3/4$ and uses the algorithm from \cref{cor:guessadvice}.
\end{proof}
\begin{corollary}
\label{cor:onechoice}
There is a randomized algorithm for the online simple unbounded \kp that makes a single non-uniformly random decision and has a competitive ratio of at most $11/6$.
\end{corollary}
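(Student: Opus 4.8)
The plan is to invoke \cref{cor:guessadvice} directly and simply optimize over the single parameter $p$. That corollary already hands us a randomized algorithm making one biased coin flip with success probabilities $p$ and $1-p$, together with the explicit bound
\[
\CR \le \max\biggl\{\Bigl(p\cdot\tfrac12+(1-p)\cdot\tfrac23\Bigr)^{-1},~\frac{2/3}{p/2}\biggr\}
\]
on its competitive ratio in expectation. So the entire task reduces to choosing $p$ to minimize this maximum.

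The first step is to observe that the first term, $\bigl(p/2+(1-p)\cdot 2/3\bigr)^{-1}=\bigl(2/3-p/6\bigr)^{-1}$, is increasing in $p$, while the second term, $(2/3)/(p/2)=4/(3p)$, is decreasing in $p$. Hence the maximum is minimized at the crossing point, where the two expressions coincide. Setting $2/3-p/6=3p/4$ (equivalently $3p=8/3-2p/3$) yields $11p/3=8/3$, so $p=8/11$.

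The second step is to substitute $p=8/11$ back into both expressions and check that each equals $11/6$: the first gives $\bigl(\tfrac{8}{11}\cdot\tfrac12+\tfrac{3}{11}\cdot\tfrac23\bigr)^{-1}=\bigl(\tfrac{4}{11}+\tfrac{2}{11}\bigr)^{-1}=\tfrac{11}{6}$, and the second gives $(2/3)/(4/11)=22/12=11/6$. Therefore the algorithm from \cref{cor:guessadvice} instantiated with $p=8/11$ makes a single non-uniformly random decision and achieves competitive ratio at most $11/6$.

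There is essentially no obstacle here — the only ``work'' is the elementary one‑variable optimization, and the heavy lifting (the case analysis proving the bound in \cref{cor:guessadvice}, and before that the structure of the one‑advice‑bit algorithm in \cref{thm:adviceuptolog}) has already been done. The one point worth stating explicitly is why the minimax is attained at the intersection rather than at an endpoint, which is immediate from the monotonicity noted above; I would include that single sentence for completeness.
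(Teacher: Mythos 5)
Your proof is correct and matches the paper's approach exactly: the paper also just instantiates the algorithm of \cref{cor:guessadvice} with $p=8/11$, which balances the two terms of the bound at $11/6$. Your additional remarks on the monotonicity of the two terms and why the minimax occurs at their crossing point are fine but not needed beyond the substitution check.
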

\begin{proof}
This is the algorithm from \cref{cor:guessadvice} for $p=8/11$.
\end{proof}

\Cref{thm:adviceuptolog} is tight.  If we provide $O(\log n)$ advice, the
competitive ratio can be made arbitrarily close to optimal.

\medskip

\begin{theorem}
\label{thm:simpleptas}
Let $\epsilon>0$.  Then there is an algorithm using
$O((1/\epsilon)\log(n/\epsilon))$ advice
bits that solves the online simple unbounded \kp with a
competitive ratio of at most \mbox{$1+\epsilon$}.
\end{theorem}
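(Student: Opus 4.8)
The plan is to let a single advice bit distinguish two complementary types of instances, according to whether or not the instance contains a ``small'' item. Fix the threshold $\tau=\epsilon/2$ and call an item \emph{small} if its size is at most $\tau$, and \emph{large} otherwise. The oracle inspects the whole instance and sends one bit telling the algorithm which of the two cases below occurs. We may assume $\epsilon\le 1$, since for larger $\epsilon$ the claim is weaker and follows from the case $\epsilon=1$.

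\emph{Case 1: the instance contains at least one small item.} Then the algorithm ignores the remaining advice, waits for the first item $x_j$ with $x_j\le\tau$, and packs it $\lfloor 1/x_j\rfloor$ times, filling the knapsack to a level of more than $1-x_j\ge 1-\tau$. Since $\opt(I)\le 1$, the competitive ratio on such an instance is at most $1/(1-\tau)=1/(1-\epsilon/2)$, and a short computation shows $1/(1-\epsilon/2)\le 1+\epsilon$ for $\epsilon\le 1$. No advice beyond the one case bit is read here.

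\emph{Case 2: every item of the instance has size strictly larger than $\tau$.} Then every feasible multiset -- in particular an optimal one -- consists of fewer than $1/\tau$ item copies, because each copy contributes more than $\tau$ to a total size of at most $1$. The oracle fixes an optimal multiset and transmits it by writing down the (at most $\lceil 1/\tau\rceil-1$) positions of its copies, each encoded by a self-delimiting binary codeword of length $O(\log n)$; self-delimiting so that the algorithm, which knows $\tau$ but not $n$, can parse the fixed number of codewords without advance knowledge of the advice length, where a repeated position encodes a multiplicity and unused slots receive a ``null'' codeword. The algorithm then packs, for each item it meets at a position listed $m$ times, exactly $m$ copies of it; this never overflows, as the encoded multiset is feasible, and it reproduces the optimal multiset exactly, achieving a gain of $\opt(I)$ and hence competitive ratio $1$ on such instances. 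The advice used is $1+O(\tau^{-1}\log n)=O(\epsilon^{-1}\log(n\epsilon^{-1}))$ bits (one may equivalently send the distinct positions together with their multiplicities, using $O(\log\epsilon^{-1})$ bits per multiplicity, which stays within the same asymptotic bound).

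Putting the two cases together gives a competitive ratio of at most $1+\epsilon$ on every instance while reading $O(\epsilon^{-1}\log(n\epsilon^{-1}))$ advice bits, as claimed. The only genuinely delicate point is the bookkeeping in Case~2: one must make the encoding of the optimal multiset self-delimiting (e.g.\ via an Elias-type code) so it can be decoded online without knowing $n$, and one must pad the advice string to a length depending only on $n$ and $\epsilon$ so that the oracle cannot smuggle extra information through its length. The structural facts -- that an instance containing a small item admits a near-optimal single-item greedy packing, and that an all-large instance has an optimal solution of description size constant in terms of $\epsilon$ -- are immediate, so I expect no real obstacle beyond this routine encoding argument.
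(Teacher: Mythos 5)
Your proof is correct, but it takes a genuinely different route from the paper. The paper does not argue \cref{thm:simpleptas} directly: it obtains it as a special case of \cref{thm:genup} for the \emph{general} unbounded \kp, where the oracle fixes an optimal solution $S$, splits it at the threshold $\delta=\epsilon/(\epsilon+2)$ into small and large items, sends an approximation $h'$ of the total size $h$ of the small part together with the index $m$ of a densest small item (which the algorithm packs $\lfloor h'/s_m\rfloor$ times), and sends the indices and multiplicities of the at most $1/\delta$ large items exactly; the loss is bounded by $2\delta v_m/s_m\le\epsilon\cdot\opt(I)$. You instead dichotomize the \emph{instances} of the simple problem: if some item of size at most $\epsilon/2$ occurs, a single advice bit suffices, since greedily packing the first such item fills the knapsack above $1-\epsilon/2$, and with $v_i=s_i$ and $\opt(I)\le 1$ this is already a $(1+\epsilon)$-approximation for $\epsilon\le1$; otherwise every feasible solution has fewer than $2/\epsilon$ copies, so the optimum can be transmitted verbatim with $O(\epsilon^{-1}\log n)$ self-delimiting bits, giving ratio $1$ on those instances. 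Your argument is simpler and sidesteps the density/approximation bookkeeping, but it relies essentially on the proportional structure (any small item is automatically valuable), so it does not extend to the general unbounded \kp, which is exactly what the paper's single proof covers; the paper's parameterization also works for every $\epsilon>0$ without your (harmless) reduction to $\epsilon\le1$. Two small points you should make explicit: zero-size items must be excluded from Case~1 (they are valueless in the simple \kp and can be ignored, but ``pack $\lfloor1/x_j\rfloor$ times'' is ill-defined for them), and the fixed number $\lceil2/\epsilon\rceil-1$ of self-delimiting codewords, padded with null codewords, is indeed what keeps the number of bits read within $O(\epsilon^{-1}\log(n\epsilon^{-1}))$ in the tape model, as you observe.
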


\medskip

This is a special case of the algorithm we will see in \cref{thm:genup} for the online general unbounded \kp. 

An interesting phenomenon can be observed when we ask about the
necessary number of advice bits to solve the problem exactly, i.e., to
reach a competitive ratio that is exactly~$1$.  Of course, $n$ advice
bits can achieve this in the 0-1 model:  Just use one advice
bit for each item, which tells us whether to take or discard it.
It also turned out that one really needs $n$ bits \cite{Frei2021} to be optimal.

Surprisingly, the same argument does not work for the unbounded \kp.
One bit can tell us whether to take or discard an item, but there are
more than these two possibilities:  The algorithm can discard an item, take it
once, take it twice, etc.  No amount of advice can tell it
exactly what to do in the first step because there is an unbounded
number of possible actions.  Does that mean that the online unbounded \kp
cannot be solved exactly by any amount of advice, even when we allow
the number of advice bits to depend on the number of items?  Yes, this
is indeed the case and completely different from the classic 0-1 setting.

\medskip

\begin{theorem}
\label{thm:simpleno1}
Let $f\colon\N\to\N$ be an arbitrary computable function.  The online simple unbounded
\kp cannot be solved exactly with $f(n)$ advice
bits on instances of length~$n$.
\end{theorem}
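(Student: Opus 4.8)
The plan is to exploit the fact that an advice-using algorithm reading $f(n)$ bits has only finitely many possible behaviours on the first item, hence can only play finitely many distinct ``first multiplicities'' $k$ (meaning: pack the first item exactly $k$ times). If the first item has size $s$, the optimal single-item gain is $\lfloor 1/s\rfloor\cdot s$, which as $s$ ranges over $(0,1)$ takes values arbitrarily close to $1$ but is attained only when $s$ divides $1$ evenly; for generic $s$ one must pick exactly the right multiplicity $k=\lfloor 1/s\rfloor$ to get close to the optimum, and any off-by-one choice loses a constant fraction. So the idea is to design, for a fixed $n$, a family of one-item (or short) instances on which the $2^{f(n)}$ available strategies cannot all be matched, forcing a competitive ratio strictly above $1$ on at least one of them.

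Concretely, I would first fix $n$ and let $N=2^{f(n)}+1$. Consider $N$ instances, each of length $1$, where instance $I_j$ consists of the single item of size $s_j$, with the $s_j$ chosen so that the optimal gains $\mathrm{opt}(I_j)=\lfloor 1/s_j\rfloor s_j$ are all distinct and all close to $1$ — for example take $s_j = 1/(m_j+1) + \eta_j$ with $m_j$ large and distinct and $\eta_j>0$ tiny, so that $\mathrm{opt}(I_j)=m_j s_j = 1 - (1-m_j\eta_j)/(m_j+1)\cdot(m_j+1)$, i.e.\ just below $1$, and the only way to achieve a gain close to $\mathrm{opt}(I_j)$ is to pack the item exactly $m_j$ times. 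Since the algorithm reads the same length-$f(n)$ advice alphabet across all these length-$n$ instances, but the advice can differ per instance, the oracle has $2^{f(n)}$ advice strings to distribute among $N>2^{f(n)}$ instances; by pigeonhole two distinct instances $I_j, I_{j'}$ receive the same advice string. On those two instances the algorithm is deterministic and sees exactly the same first item? No — the items differ, so this needs care.

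This is where the main obstacle lies, and I would instead route the argument through a single ``guessing'' instance. The cleaner approach: on receiving its $f(n)$ advice bits, the algorithm commits (for length-$n$ instances) to some multiplicity $k$ for the first item as a function of that item's size; but before seeing the item it has only $2^{f(n)}$ advice strings, so across all length-$n$ inputs whose first item lies in a fixed tiny size-window there are only finitely many distinct multiplicities the algorithm will ever play on that first item. Pick the window small enough that within it the optimal multiplicity $\lfloor 1/s\rfloor$ is some fixed large $m$, yet also pick inside the window infinitely many sizes $s$ that are \emph{not} of the form $1/\ell$; then construct the length-$n$ instance with first item $s$ and $n-1$ further copies of negligibly small items. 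For such $s$, $\mathrm{opt}$ equals $m s$ using only the first item, so an optimal algorithm must play multiplicity exactly $m$ on the first item and then fill the remaining $1-ms<s$ space with the tiny items. The key remaining step is to show that no \emph{single} advice-determined rule covering this window achieves ratio $1$ simultaneously on all these $s$: if it plays the same $k$ throughout the window it is suboptimal on all $s\ne 1/\ell$ with $\lfloor1/s\rfloor\ne k$; and since the window carries only finitely many advice strings it can realise only finitely many such $k$-rules, whereas the requirement ``gain $=ms$ for \emph{every} non-reciprocal $s$ in the window, with $m$ varying continuously'' cannot be met by finitely many rules — indeed choosing $s$ in a sub-window where $m$ is constant, any finite set of rules must repeat some $k$, and for that $k$ we pick $s$ with $ms > ks$ strictly, giving $\mathrm{CR}(I)=\mathrm{opt}(I)/\gain(I)\ge ms/\big((k)s + (1-ks)\big) > 1$ after accounting that the tiny items add at most $1-ks$. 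The hard part is packaging this cleanly — making sure the ``negligible tail'' of $n-1$ small items genuinely forces the optimum to rely on the first item's multiplicity and doesn't give the algorithm a cheap recovery — and I expect that to require a careful choice of the tail sizes (e.g.\ all equal to a value $\delta \ll 1-ms$ so that at most $\lfloor(1-ks)/\delta\rfloor$ of them help, a bounded contribution) together with letting $\delta\to 0$ to conclude the ratio is bounded away from $1$ by a constant depending only on the window, not on $n$.
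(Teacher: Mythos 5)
Your proposal has a genuine gap at its central step. You claim that, because only $2^{f(n)}$ advice strings exist, the algorithm can ``realise only finitely many $k$-rules'' on first items drawn from a small size-window, and you then argue that some constant multiplicity $k$ must repeat. But each advice string determines a deterministic algorithm whose decision on the first item is a \emph{function of the observed size} $s$, not a constant: a single rule such as ``pack the first item $\lfloor 1/s\rfloor$ times'' plays the correct multiplicity simultaneously for every $s$ in your window, with no advice at all. Since your adversarial family varies the first item across instances, nothing prevents the algorithm from adapting to the size it sees, and the pigeonhole never yields a contradiction --- you noticed exactly this obstacle for your single-item attempt, but the re-routed window argument inherits it. A second problem is the tail: in the unbounded model the tiny items of size $\delta$ can themselves be packed $\lfloor 1/\delta\rfloor$ times, so the optimum of your instance is not $ms$ and need not hinge on the first item's multiplicity at all; your closing sentence concedes that this part is unresolved, and it is precisely where the difficulty sits.

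The paper's proof avoids both issues by keeping the first item \emph{identical} in every instance of the adversarial family and encoding the required multiplicity in the second item. It uses the length-$2$ instances $I_k=\bigl(1/m-1/m^3,\;1-k/m+k/m^3\bigr)$ for $k=0,\dots,m$, where the knapsack can be filled exactly (gain $1$) only by packing the first item exactly $k$ times and the second item once. Because the first item is the same in all $I_k$, the multiplicity chosen for it depends on the advice alone, so at most $2^{f(2)}$ values are realisable; choosing $m$ with $\log(m+1)>f(2)$ forces suboptimality on some $I_k$. If you want to repair your argument, this is the missing idea: fix the first item and let the later items determine which multiplicity is uniquely optimal.
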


\begin{proof}
Let $m\in\N$.  We look at the instances $I_k=(1/m-1/m^3, 1-k/m+k/m^3)$, for
$k=0,\ldots,m$.

An algorithm that sees the first item of size~$1/m-1/m^3$ has to decide how
often to take it, which corresponds to a number between $0$ and~$m$.  The only
possibility to be optimal on the instance~$I_k$ is to take the first
item $k$ times, because then the knapsack can be filled completely.  We
have $k\cdot(1/m-1/m^3) + (1-k/m+k/m^3)=1$ and there is no other
possibility to achieve that task.  An algorithm needs sufficient advice to
choose the right one out of $m+1$ possibilities, which requires at
least $\log(m+1)$ advice bits.  The length of the instances $I_k$ is
only $2$, so, if $\log(m+1)>f(2)$, the algorithm will be suboptimal on
at least one~$I_k$.
\end{proof}

\section{The Online General Unbounded \kp}
\label{sec:general}

Finally, let us take a look at the online general unbounded \kp.  Here each
item $x_i$ has a size $0\leq s_i\leq 1$ and a value~$v_i\geq 0$.
Of course, all lower bounds of the online simple unbounded \kp can be transferred immediately
to the more general problem. Furthermore, it turns out that the online general unbounded \kp is
non-competitive for deterministic or randomized algorithms and even for
algorithms that use fewer than $\log n$ advice bits.

\medskip

\begin{theorem}
\label{thm:generalrandom}
The competitive ratio for any randomized online algorithm for the general unbounded \kp is unbounded.
\end{theorem}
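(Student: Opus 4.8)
The plan is to exhibit, for every randomized algorithm and every target ratio $c$, a pair of instances on which the algorithm's expected gain is too small on at least one of them. The key feature of the general (as opposed to simple) unbounded \kp is that values are decoupled from sizes, so a single tiny item can be made to carry an enormous value. First I would fix a large parameter $M$ and consider the two instances $I_1 = (x)$ and $I_2 = (x, y)$, where $x$ has size $\eps$ (for some small $\eps>0$) and value $1$, while $y$ has size $1$ and value $M$. On $I_1$ the only way to gain anything is to pack $x$ at least once; on $I_2$ the optimal solution is to pack $y$ once for a gain of $M$, and an algorithm can do this only if it packed $x$ at most $\lfloor (1-1)/\eps \rfloor = 0$ times, i.e. only if it left the knapsack empty after seeing~$x$. (If $\eps$ is chosen so that packing $x$ even once blocks $y$, then any algorithm that packs $x$ gets value at most $\lfloor 1/\eps\rfloor$ on $I_2$, which is negligible compared to~$M$ once $M \gg 1/\eps$.)

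The second step is the standard averaging-over-the-adversary argument, exactly as in the paragraph preceding \cref{thm:randlow}: a randomized algorithm packs $x$ at least once with some probability $p_0$ and leaves the knapsack empty with probability $1-p_0$. Its expected gain on $I_1$ is at most $p_0 \cdot \lfloor 1/\eps\rfloor$ (it may pack $x$ many times, but each copy has value $1$ and at most $\lfloor 1/\eps \rfloor$ copies fit), while its expected gain on $I_2$ is at most $p_0 \cdot \lfloor 1/\eps \rfloor + (1-p_0)\cdot M$, since a gain near $M$ requires having packed $y$, which in turn requires having left room for it. Then
\[
\CR_{\textsc{Alg}} \ge \max\Bigl\{ \frac{1}{p_0\lfloor 1/\eps\rfloor},\ \frac{M}{p_0\lfloor 1/\eps\rfloor + (1-p_0)M}\Bigr\}.
\]
Taking the minimum over $p_0 \in [0,1]$ and then sending $M \to \infty$ (with $\eps$ fixed, say $\eps = 1/2$ so $\lfloor 1/\eps\rfloor = 2$) forces this quantity to grow without bound: if $p_0$ is bounded away from $0$ the first term is fine but the second blows up, and if $p_0 \to 0$ the first term blows up. More carefully, one optimizes and finds the min-max value is $\Theta(\sqrt{M})$ (balancing $1/p_0 \sim M/((1-p_0)M)$ gives $p_0 \sim 1/\sqrt M$), which is unbounded. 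Hence no finite competitive ratio is achievable.

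I do not expect any serious obstacle here — the argument is a routine adaptation of the classical non-competitiveness proof for the 0-1 \kp sketched in the introduction, with the only twist being that the unbounded algorithm can pack the small item repeatedly, which I neutralize by making the big item's value dominate everything. The one point requiring a little care is phrasing the ``expected gain on $I_2$ is at most $p_0 \lfloor 1/\eps\rfloor + (1-p_0) M$'' bound rigorously: an algorithm that packs $x$ exactly once still has $1 - \eps$ free space, so with $\eps = 1/2$ it cannot fit $y$, but one should either choose sizes so that even one copy of $x$ blocks $y$, or more cleanly use $x$ of size strictly greater than $0$ together with $y$ of size exactly $1$, so that any nonzero number of copies of $x$ already precludes $y$; then the conditioning on ``knapsack empty after $x$'' is exactly the event of probability $1-p_0$, and the bound is immediate. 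With that choice the whole proof is three lines of optimization.
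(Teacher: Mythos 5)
There is a genuine flaw: your two-instance family cannot yield an unbounded lower bound, and the final optimization step is miscalculated. Your second term satisfies
\[
\frac{M}{p_0\lfloor 1/\eps\rfloor + (1-p_0)M}\;\le\;\frac{M}{(1-p_0)M}\;=\;\frac{1}{1-p_0},
\]
so it blows up only when $p_0\to 1$, not ``whenever $p_0$ is bounded away from $0$,'' and the balancing you describe gives $p_0=1/2$, not $p_0\sim 1/\sqrt{M}$ (note $M/((1-p_0)M)$ has no $M$ left in it). Concretely, with $\eps=1/2$ the algorithm that flips one fair coin and either packs $x$ as often as possible or ignores $x$ and waits for a later item has expected gain $1$ on $I_1$ (opt $2$) and at least $M/2$ on $I_2$ (opt $M$), hence competitive ratio at most $2$ on both instances for every $M$. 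This is the same hedging phenomenon that makes one random bit give ratio $2$ for the simple 0-1 \kp: against only two instances, a randomized algorithm can always split its probability mass and lose at most a constant factor. To force an unbounded ratio you need an unbounded family of instances, so that the algorithm's total probability of ``committing'' must be spread over unboundedly many decision points.

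That is exactly what the paper does, albeit by citation: \cref{thm:generalrandom} follows from the randomized lower bound of B\"ockenhauer et al.\ for the general 0-1 \kp, whose hard instances consist only of items of size $1$ (so packing multiple copies is impossible and the bound transfers verbatim to the unbounded variant). The standard construction there is the prefix family $I_k=(y_1,\dots,y_k)$, $k\le n$, where every $y_j$ has size $1$ and value $B^j$ for a huge base $B$. Any algorithm packs at most one item; if $q_j$ is the probability it packs $y_j$, then $\sum_j q_j\le 1$, so some $q_k\le 1/n$, and on $I_k$ the expected gain is at most $q_kB^k+nB^{k-1}$ against $\opt(I_k)=B^k$, giving a ratio of roughly $n$ for $B$ large; letting $n\to\infty$ gives unboundedness. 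Your idea of exploiting decoupled values is the right instinct, but it must be implemented with such a growing chain of values (and size-$1$ items neutralize the ``pack it many times'' issue more cleanly than a tiny item), not with a single pair of instances.
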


\medskip

\begin{theorem}
\label{thm:generaladviceunb}
The competitive ratio of any online algorithm for the general unbounded \kp that uses less than $\log(n)$ advice bits is unbounded.
\end{theorem}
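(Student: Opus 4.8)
The plan is to transfer the classical $\Theta(\log n)$ advice lower bound known for the general 0-1 \kp: an item of size exactly~$1$ fills the knapsack and can thus be packed at most once even in the unbounded model, so on instances built only from unit-size items the unbounded problem coincides with the 0-1 one.

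For a fixed $n$, I would use the $n$ nested instances $I_1,\dots,I_n$ with $I_k=(x_1,\dots,x_k)$, where item $x_i$ has size $s_i=1$ and value $v_i=n^i$. Each item fills the knapsack by itself, so $\opt(I_k)=v_k=n^k$ (pack the last item once), and on any $I_k$ an algorithm packs at most one item in total. So the task reduces to picking online a single item out of a prefix of the geometrically growing sequence $x_1,x_2,\dots$: picking $x_t$ while the true instance is $I_k$ with $t\le k$ yields gain $v_t$, i.e.\ a competitive ratio of $n^{k-t}$, while picking nothing yields gain $0$.

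Next I would invoke the standard reduction used, e.g., in the lower-bound part of \cref{thm:adviceuptolog}: an algorithm reading fewer than $\log n$ advice bits on every instance of length at most $n$ behaves, on the family $I_1,\dots,I_n$, like one out of fewer than $n$ deterministic advice-free online algorithms, agreeing on each $I_k$ with one of them (one pads the read bits to a common length $N$ with $2^N<n$). Now fix such a deterministic algorithm $B$ and consider its run on the common prefix $x_1,x_2,\dots$: either $B$ never packs anything, or it first packs some item $x_t$. On $I_k$ with $t>k$ (or when $B$ never packs), $B$ packs none of $x_1,\dots,x_k$, so its gain is $0$ and its competitive ratio is infinite; on $I_k$ with $t\le k$, $B$ packs $x_t$, the knapsack is full, $B$ gains only $n^t$, and its competitive ratio on $I_k$ is $n^{k-t}$, which equals $1$ if $k=t$ and is at least $n$ otherwise. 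Hence each of the fewer than $n$ deterministic algorithms is better than $n$-competitive on at most one member of the family, so by the pigeonhole principle there is a $k^*$ on which none of them is; whatever advice the oracle writes for $I_{k^*}$, the algorithm coincides there with one of these deterministic algorithms and therefore has competitive ratio at least $n$ on $I_{k^*}$. Since the competitive ratio is a supremum over all instances and $n$ was arbitrary, it is unbounded.

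I do not anticipate a genuine obstacle: this is the textbook advice lower bound for general knapsack. The two points meriting a sentence of care are the reduction from ``fewer than $\log n$ advice bits'' to ``fewer than $n$ deterministic algorithms'' for a family whose instances have different lengths, where one uses that on $I_k$ fewer than $\log k\le\log n$ bits are read, and the observation that our model explicitly permits an algorithm to pack nothing and earn $0$, which is what makes the case $t>k$ immediately catastrophic and lets us dispense with appending dummy items. The only design choice is to take unit-size items of value $v_i=n^i$, so that stopping one step too early already costs a factor $n$.
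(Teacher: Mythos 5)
Your proof is correct and takes essentially the same route as the paper: the paper's proof simply observes that the known advice lower bound for the general 0-1 \kp~\cite{BockenhauerKKR2014} uses only items of size $1$, which can be packed at most once, so the construction transfers verbatim to the unbounded setting. Your family of unit-size items with values $n^i$ and the pigeonhole argument over the fewer than $n$ deterministic strategies induced by sub-logarithmic advice is precisely that cited argument, just spelled out explicitly.
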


\medskip

Both \cref{thm:generalrandom,thm:generaladviceunb} follow immediately from 
results by B\"ockenhauer et al.\ 
for the 0-1 \kp~\cite{BockenhauerKKR2014}. The examples given in the proofs
used only items of size $1$, so they still hold for the unbounded variant.

It might be of interest to compare the general unbounded \kp and
the general 0-1 \kp with respect to their advice complexity. The latter problem is
very sensitive to how the input is encoded.  One possibility is to
assume that both the sizes and values of items are given as real
numbers and the algorithm is able to do basic operations like
comparisons and arithmetic on these numbers.  To achieve arbitrarily good
competitive ratios with logarithmic advice, a more restrictive input
encoding was used~\cite{BockenhauerKKR2014}. The reason is that
all optimal---and all near-optimal---solutions use a very large number
of items.  Their indices cannot be communicated with a small amount of
advice, and the alternative is to encode sizes of items.

It turns out that the general unbounded \kp is easier to
handle.  The basic property that we will use in the proof of \cref{thm:genup} is
that a near-optimal solution can be built with a constant number of
items.  Their indices can be communicated with logarithmic advice.
Moreover, these items are each used only a constant number of times
with the exception of only one item.  The following intuition shows
why this is the case:  If two items are both packed a large number of
times, both of them have to be tiny.  Then you could just use the
denser one of them.  The resulting new solution cannot be much worse.

\medskip

\begin{theorem}\label{thm:genup}
Let $\epsilon>0$ be a number that does not have to be constant with respect
to the number $n$ of items.
Then the online general unbounded \kp can be solved with a competitive
ratio of at most $1+\epsilon$ with $O((1/\epsilon)\log(n/\epsilon))$ many advice bits.  
\end{theorem}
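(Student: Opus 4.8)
The plan is to exploit the structural fact announced in the paragraph preceding the statement: there is a near-optimal solution that uses at most a constant number $c = c(\epsilon)$ of \emph{distinct} item indices, and among these only one index is used a super-constant number of times. First I would prove this structural lemma. Take an optimal multiset solution. If two distinct items $x,y$ are both used many times, then both have size $O(\epsilon)$ (otherwise their combined multiplicity times their size would exceed $1$); replace all copies of the less dense of the two by copies of the denser one, filling roughly the same total size. The value does not decrease, and the wasted space is at most one item's size, i.e.\ $O(\epsilon)$. Iterating, we reach a solution in which at most one index has large multiplicity; all other indices have multiplicity bounded by $1/\delta$ for a threshold $\delta = \Theta(\epsilon)$, so the number of distinct ``small-multiplicity'' indices is at most $1/\delta = O(1/\epsilon)$. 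Rounding the large-multiplicity count down to the nearest power of $(1+\epsilon)$ costs another $(1+\epsilon)$ factor. The upshot: there is a solution of value at least $(1-O(\epsilon))\opt(I)$ that is described by $O(1/\epsilon)$ indices, each with an explicitly bounded multiplicity, plus one index whose multiplicity is one of $O(\log_{1+\epsilon} n) = O(\epsilon^{-1}\log n)$ possible values.

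Next I would design the advice and the algorithm. The oracle, knowing $I$, computes such a near-optimal succinct solution and writes down: the value of $n$ (or rather, self-delimiting encodings throughout), the indices $i_1 < i_2 < \dots < i_k$ with $k = O(1/\epsilon)$ used in the solution, and for each the multiplicity to pack. Each index is a number in $\{1,\dots,n\}$, so $O(\log n)$ bits; there are $O(1/\epsilon)$ of them, giving $O(\epsilon^{-1}\log n)$ bits; the multiplicities are either $O(1)$-bounded (constant bits) or, for the single exceptional index, one of $O(\epsilon^{-1}\log n)$ values ($O(\log(\epsilon^{-1}\log n))$ bits, which is absorbed). Self-delimiting encoding (e.g.\ prefixing each block with its length in unary-then-binary) adds only lower-order overhead, yielding the claimed $O(\epsilon^{-1}\log(n/\epsilon))$ bound. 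The algorithm simply reads the advice, and as items arrive it checks whether the current position is one of the designated indices $i_j$; if so, it packs that item exactly the prescribed number of times, and otherwise discards it. Since the prescribed solution fits in the knapsack, every packing decision is feasible, and the final gain equals the value of the succinct solution, which is at least $(1+\epsilon)^{-1}\opt(I)$ after renaming $\epsilon$.

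The main obstacle is the structural lemma, specifically making the ``replace the less dense item by the denser one'' exchange argument fully rigorous while controlling the accumulated rounding loss: each exchange and each multiplicity-rounding step leaks an $O(\epsilon)$ additive fraction of the knapsack, and one must verify that only $O(1)$ such steps are needed so the total loss stays $O(\epsilon)$ — hence a single rescaling $\epsilon \mapsto \epsilon/C$ at the end recovers a clean $1+\epsilon$. A secondary but genuine subtlety is that $\epsilon$ is allowed to depend on $n$: one must check that the index list still has only $O(1/\epsilon)$ entries and that $O(\log(1/\epsilon))$ terms really are dominated by $O(\epsilon^{-1}\log(n/\epsilon))$ for all relevant ranges of $\epsilon$ (including $\epsilon$ as small as, say, $1/n$), which is a routine but necessary check. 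Everything else — the online algorithm itself and the feasibility argument — is immediate once the succinct solution is in hand.
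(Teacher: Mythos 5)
Your overall plan (communicate a succinct near-optimal solution and replay it) is the paper's plan, but the structural lemma you base it on has a genuine gap. Your exchange argument only merges pairs of items that are \emph{both used many times}, and at the end you bound the \emph{multiplicities} of the remaining indices by $1/\delta$; from this you conclude that there are only $O(1/\epsilon)$ distinct indices. That inference is a non sequitur: a bound on multiplicities does not bound the number of distinct indices. An optimal solution may contain up to $n$ distinct items, each of size about $1/n$ and each used only once; none of your exchanges ever fires, yet describing this solution index-by-index costs $\Omega(n\log n)$ advice bits. What actually bounds the number of distinct indices is a lower bound on their \emph{sizes}: items of size greater than $\delta$ automatically number at most $1/\delta$ and are each used at most $1/\delta$ times. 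The paper therefore classifies by size (with $\delta=\epsilon/(\epsilon+2)$), keeps the large items of the optimum verbatim, and replaces \emph{all} small items -- regardless of multiplicity -- by the single densest small item $x_m$ in one shot. This one-shot replacement also disposes of your ``main obstacle'': the accumulated loss is not controlled by showing only $O(1)$ exchange steps occur (that claim is false; there can be $\omega(1)$ distinct small items of large multiplicity), but by doing the substitution once, losing only $O(\delta)\cdot v_m/s_m\le O(\delta)\cdot\opt(I)/(1-\delta)$.

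A second, independent problem is your encoding of the exceptional item's multiplicity. You claim that after rounding to powers of $(1+\epsilon)$ it is one of $O(\epsilon^{-1}\log n)$ values, but the multiplicity can be as large as $\lfloor 1/s_m\rfloor$, and $s_m$ is a real number with no relation to $n$; hence the number of candidate values (and the bits needed to name one) is not bounded by any function of $n$ and $\epsilon$. The paper sidesteps this by never encoding a multiplicity at all: the oracle sends a quantized total size $h'$ with $h-\delta<h'\le h$, where $h$ is the space the optimum devotes to small items, which takes only $O(\log(1/\delta))$ bits since $h'\in[0,1]$ needs precision $\delta$; the algorithm then computes $\lfloor h'/s_m\rfloor$ itself upon seeing $x_m$. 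With these two repairs -- size-based classification plus one-shot replacement, and communicating $h'$ instead of a multiplicity -- your argument becomes essentially the paper's proof; without them, the claimed advice bound $O((1/\epsilon)\log(n/\epsilon))$ does not follow.
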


\begin{proof}
Let $\delta=\epsilon/(\epsilon+2)$.
We fix some optimal solution $S$
to a given instance $I$ of length~$n$.  We say that an item in $S$ is
\emph{small} if its size is at most~$\delta$.  Let $h$ be the total
size of all small items in~$S$ and, if $h>0$, let $x_m\in S$ be a small
item in $S$ with maximum density, i.e., maximizing $v_m/s_m$. Using
$O(\log(1/\delta))$
many advice bits, the oracle tells the algorithm a number $h'$ such that
$h-\delta<h'\leq h$ and, using another $O(\log n)$ advice bits,  it
communicates the index~$m$.

When the algorithm receives $x_m$, it packs it into the knapsack
$\lfloor h'/s_m\rfloor$
times, filling the knapsack at most to a size of $h'$ and packing a
value of at least $v=(h'-\delta)v_m/s_m\geq(h-2\delta)v_m/s_m$.  Let us
compare this to the total value $v'$ in $S$ contributed by small items.
Since $x_m$ is a small item with highest density, $v'\leq
hv_m/s_m$ and $v'-v\leq2\delta v_m/s_m$.

Next, we turn to the items in $S$ that are not small.  There can be at most
$1/\delta$ many of them and each can be packed at most $1/\delta$
times.  We can therefore encode both their indices and their packing
multiplicities by $1/\delta$ numbers each, of sizes at most $n$ and $1/\delta$, respectively,
which can be done with $O((1/\delta)(\log n + \log(1/\delta)))=O((1/\delta)\log(n/\delta))$ many bits.  Receiving this
information as advice allows the algorithm to pack large items exactly
the same as in the optimal solution~$S$.

The algorithm has then packed the same large items and has lost only a
value of $2\delta v_m/s_m$ by suboptimally packing small items.

We have to bound~$v_m/s_m$.  Intuitively, $v_m/s_m$ cannot be much
smaller than the average density of all items in~$S$.  Otherwise, we
could pack $x_m$ as often as possible into an empty knapsack achieving
a gain of at least $(1-\delta)v_m/s_m$, which cannot be larger
than~$\opt(I)$.

Hence, we can assume from now on that $(1-\delta)v_m/s_m\leq\opt(I)$,
which implies
\[
2\delta v_m/s_m\leq 2\delta\opt(I)/(1-\delta)=\epsilon\cdot\opt(I)\;,
\]
which is an upper bound on the gap between the gain of our algorithm
and the optimal solution.
\end{proof}

\section{Conclusion}
\label{sec:conclusion}

We have analyzed the hardness of the unbounded \kp for online algorithms
with and without advice, both in the simple and general case, and found
some significant differences to the 0-1 \kp. A simple greedy strategy
achieves a constant competitive ratio for the simple unbounded \kp even in
the deterministic case. Moreover, unlike for the simple 0-1 \kp, an
unlimited amount of randomization helps improving the competitive ratio for
the online simple unbounded \kp. 

It remains as an open problem to find out the exact competitive
ratio in expectation that can be achieved for the online simple unbounded \kp by randomized
algorithms. It would also be interesting to see, both for the simple and
the general case, how exactly the transition from sublogarithmic advice to
logarithmic advice looks like: What competitive ratio can be achieved with
$\alpha\log n$ advice bits, for some fixed constant $\alpha$?

For the online 0-1 \kp, considering variants where the algorithm is allowed to
remove packed items or reserve items for packing them later on has given
valuable structural insights. It would for sure be interesting to look at
such variants also for the online unbounded \kp.

\backmatter
\bmhead{Acknowledgments}
We would like to thank an anonymous reviewer for valuable suggestions regarding \Cref{thm:adviceuptolog}. 

Ralf Klasing was partially supported by the ANR project TEMPOGRAL
(ANR-22-CE48-0001).  Matthias Gehnen, Henri Lotze, and Daniel Mock were
partially supported by an IDEA League Short-Term Research Exchange
Grant.

\begin{appendices}

\newpage
\section{Technical Result for \cref{thm:randup}}

In the proof of \cref{thm:randup}, we need the following result that can be
proven by methods from standard calculus. We assume the preconditions and
notation from \cref{thm:randup}.

\begin{lemma}
\label{lemma:increase}
The function 
\begin{equation*}g(\xi)=\frac{\xi}{\frac{2}{3}\phalf+\frac{2}{3}p_{2/3} +\int_{\frac{2}{3}}^{\xi}x\cdot f_X(x)\d x}
\end{equation*}
is increasing for $2/3\leq \xi\leq 1$.
\end{lemma}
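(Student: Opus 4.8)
The plan is to establish $g'(\xi)>0$ on $(2/3,1)$ directly. Write $g(\xi)=\xi/D(\xi)$ with $D(\xi)=\tfrac23\phalf+\tfrac23\ptwo+\int_{2/3}^{\xi}x f_X(x)\,\d x$. Since $\phalf,\ptwo>0$ and $f_X\ge 0$, the denominator $D$ is positive on $[2/3,1]$, so $g$ is differentiable there and the sign of $g'(\xi)=\bigl(D(\xi)-\xi D'(\xi)\bigr)/D(\xi)^2$ equals the sign of $N(\xi):=D(\xi)-\xi D'(\xi)$. It therefore suffices to show $N(\xi)>0$ for all $\xi\in[2/3,1]$.

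The first step is to exploit the explicit density: for $2/3<x\le 1$ one has $x f_X(x)=\tfrac{\phalf}{2}\bigl(1+\ln(2-x)\bigr)$, which cancels the apparent $1/x$-singularity and makes $D$ twice continuously differentiable on $(2/3,1)$ with $D'(\xi)=\tfrac{\phalf}{2}(1+\ln(2-\xi))$ and $D''(\xi)=-\tfrac{\phalf}{2(2-\xi)}<0$. Differentiating $N$ then gives the telescoping identity $N'(\xi)=D'(\xi)-D'(\xi)-\xi D''(\xi)=-\xi D''(\xi)=\tfrac{\xi\phalf}{2(2-\xi)}>0$, so $N$ is strictly increasing on $[2/3,1]$.

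It remains to check the left endpoint. Here $N(2/3)=D(2/3)-\tfrac23 D'(2/3)=\tfrac23(\phalf+\ptwo)-\tfrac{\phalf}{3}\bigl(1+\ln(4/3)\bigr)$, which is already positive because $\ln(4/3)<1$ forces $\tfrac{\phalf}{3}(1+\ln(4/3))<\tfrac23\phalf$, so $N(2/3)>\tfrac23\ptwo\ge 0$; alternatively, substituting the value of $\ptwo$ obtained from \cref{eq:randomprob1} yields the clean form $N(2/3)=\phalf\bigl(\tfrac23-\ln(4/3)\bigr)>0$. Since $N$ is increasing and positive at $2/3$, we conclude $N(\xi)>0$ on all of $[2/3,1]$, hence $g'(\xi)>0$, i.e.\ $g$ is increasing, as claimed.

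I do not expect a genuine obstacle here: the whole argument is elementary calculus. The only points deserving care are the simplification $x f_X(x)=\tfrac{\phalf}{2}(1+\ln(2-x))$, which is what makes the second derivative available, and noticing that $N'$ collapses to $-\xi D''(\xi)$; after that, the single sign check $D''<0$ together with one endpoint evaluation finishes the proof. Note also that one can avoid invoking \cref{eq:randomprob1} entirely, since positivity of $\phalf$ and $\ptwo$ already suffices at the endpoint.
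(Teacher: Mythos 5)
Your proof is correct and follows essentially the same route as the paper: your $N(\xi)=D(\xi)-\xi D'(\xi)$ is exactly the paper's $h(\xi)=D(\xi)-\xi^2 f_X(\xi)$, and both arguments combine a positivity check at the endpoint $\xi=2/3$ (using $\ln(4/3)<1$, with $\ptwo\geq 0$) with monotonicity of this quantity, whose derivative collapses in both cases to $\phalf\,\xi/\bigl(2(2-\xi)\bigr)\geq 0$. Your phrasing via $N'=-\xi D''$ is a slightly tidier bookkeeping of the same computation (and, incidentally, carries the factor $\phalf/2$ that the paper's displayed $h'$ drops), but there is no substantive difference in approach.
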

\begin{proof}
We will show that 
\begin{align*}
g'(\xi)=\frac{\left(\frac{2}{3}\phalf+\frac{2}{3}p_{2/3} +\int_{\frac{2}{3}}^{\xi}x\cdot f_X(x)\d x\right)-\xi^2f_X(\xi)}{\left(\frac{1}{2}\phalf+\int_{\frac{1}{2}}^{\frac{2}{3}} x\cdot f_X(x)\d x+\frac{2}{3}p_{2/3} +\int_{\frac{2}{3}}^{\xi}x\cdot f_X(x)\d x\right)^2}\geq 0
\end{align*}
for $\frac{2}{3}\leq \xi\leq 1$. This is equivalent to showing that
\begin{equation*}
h(\xi)=\frac{2}{3}\phalf+\frac{2}{3}p_{2/3} +\int_{\frac{2}{3}}^{\xi}x\cdot f_X(x)\d x-\xi^2f_X(\xi)\geq 0
\end{equation*}
for $\frac{2}{3}\leq \xi\leq 1$. When $\xi=\frac{2}{3}$, we check that 
\begin{equation*}
\frac{2}{3}\phalf+\frac{2}{3}p_{2/3}
-\frac{4}{9}f_X\biggl(\frac{2}{3}\biggr)\geq
\frac{2}{3}\phalf-\frac{4}{9}f_X\biggl(\frac{2}{3}\biggr)\geq 0,
\end{equation*}
since 
\[
\frac{2}{3}\phalf-\frac{4}{9}f_X\biggl(\frac{2}{3}\biggr)\geq 0\iff
1-\frac{2}{3}\cdot\frac{1-\ln(\frac{4}{3})}{\frac{4}{3}}\geq 0\iff
1+\ln\biggl(\frac{4}{3}\biggr)\geq 0\;.
\]
Finally, we check that $h(\xi)$ is itself increasing for $\frac{2}{3}\leq \xi\leq 1$, since
\begin{align*}
h'(\xi)&=\xi\cdot f_X(\xi)-\left(2\xi \cdot f_X(\xi)+\xi^2\cdot f'_X(\xi)\right)\\
&=-\xi\cdot f_X(\xi)-\xi^2\cdot\frac{-\frac{2\xi}{2-\xi}-2(1+\ln(2-\xi))}{4\xi^2}\\
&=-\frac{1+\ln(2-\xi)}{2}+\frac{\xi}{2-\xi}-\frac{1+\ln(2-\xi)}{2}\\
&=\frac{\xi}{2-\xi}\\
&\geq 0
\end{align*}
for $2/3\leq \xi\leq 1$.
\end{proof}
\end{appendices}

\end{document}